\documentclass[12pt]{article}
\usepackage{amsmath,amsfonts,amsthm,amssymb}
\usepackage{url}
\usepackage{graphicx}
\usepackage[font=small]{caption}
\usepackage{subcaption}
\usepackage[colorlinks=true,allcolors=blue]{hyperref}
\usepackage{algpseudocode}
\usepackage{algorithmicx}
\usepackage{bbm}
\usepackage{xfrac}
\usepackage{float}
\usepackage{color}
\usepackage{mathtools}
\usepackage{comment}
\usepackage[a4paper, total={6.8in, 9in}]{geometry}
\usepackage{thm-restate}

\newtheorem{theorem}{Theorem}
\newtheorem{lemma}[theorem]{Lemma}

\newcommand{\be}[0]{\begin{enumerate}}
\newcommand{\ee}[0]{\end{enumerate}}
\newcommand{\bi}[0]{\begin{itemize}}
\newcommand{\ei}[0]{\end{itemize}}

\newcommand{\mass}{\textit{mass}}
\newcommand{\size}{\textit{size}}

\bibliographystyle{plainurl}

\title{Analysis of Smooth Heaps and Slim Heaps}

\author{Maria Hartmann\thanks{Institut f\"ur Informatik, Freie Universit\"at Berlin. Work supported by DFG grant KO 6140/1-1.} ~~ L{\'a}szl{\'o} Kozma\thanks{Institut f\"ur Informatik, Freie Universit\"at Berlin. Work supported by DFG grant KO 6140/1-1.} ~~ Corwin Sinnamon\thanks{Department of Computer Science, Princeton University.} ~~ Robert E.\ Tarjan\thanks{Department of Computer Science, Princeton University; and Intertrust Technologies. Research at Princeton University partially supported by an innovation research grant from Princeton and a gift from Microsoft.}}
\date{}

\begin{document}

\maketitle

\abstract{
The smooth heap is a recently introduced self-adjusting heap [Kozma, Saranurak, 2018] similar to the pairing heap [Fredman, Sedgewick, Sleator, Tarjan, 1986]. The smooth heap was obtained as a heap-counterpart of Greedy BST, a binary search tree updating strategy conjectured to be \emph{instance-optimal} [Lucas, 1988], [Munro, 2000]. Several adaptive properties of smooth heaps follow from this connection; moreover, the smooth heap itself has been conjectured to be instance-optimal within a certain class of heaps. Nevertheless, no general analysis of smooth heaps has
existed until now, the only previous analysis showing that, when used in \emph{sorting mode} ($n$ insertions followed by $n$ delete-min operations), smooth heaps sort $n$ numbers in $O(n\lg n)$ time.

In this paper we describe a simpler variant of the smooth heap we call the \emph{slim heap}. We give a new, self-contained analysis of smooth heaps and slim heaps in unrestricted operation, obtaining amortized bounds that match the best bounds known for self-adjusting heaps. Previous experimental work has found the pairing heap to dominate other data structures in this class in various settings. Our tests show that smooth heaps and slim heaps are competitive with pairing heaps, outperforming them in some cases, while being comparably easy to implement.
}

\section{Introduction}

A heap (priority queue) data structure stores a collection of items, each with an associated real-valued key (priority). Operations include inserting an item, deleting an item with smallest key, decreasing the key of an item, or melding (merging) two heaps. Heaps are among the most basic and well-studied structures in computer science, with applications in sorting, event simulation, graph algorithms, and many other settings (see e.g.~\cite{BrodalSurvey, LarkinSenTarjan} and the references therein).

Williams' implicit binary heap~\cite{Williams64} and its variants remain the simplest, and in many applications, the fastest implementation. %
Carefully engineered \emph{sequence-based} heaps~\cite{Sanders00} tend to be even faster in several (although not all) practical scenarios, because of their better use of caching. %

To implement efficient \emph{meld} and to obtain theoretically optimal running times for \emph{insert} and \emph{decrease-key}, more sophisticated data structures have been proposed. The Fibonacci heap~\cite{Fibonacci} was the first heap implementation to obtain the asymptotically optimal amortized bounds of $O(\lg{n})$ time for \emph{delete-min} and \emph{delete} in a size-$n$ heap, and $O(1)$ time for all other operations. (We denote by $\lg$ the base-two logarithm.) A large number of alternative designs that match these bounds have been explored throughout the years, see e.g.~\cite{HollowHeaps} and the references therein. Due to their relatively complex implementation, Fibonacci heaps and other theoretically optimal heaps have been found to be slower in practice than simpler heaps with sub-optimal guarantees~\cite{LarkinSenTarjan}.  

\subparagraph*{Self-adjusting heaps.} Is it possible to design data structures that are simple and efficient in practice but also efficient in theory, at least in an amortized sense? Self-adjusting data structures attempt to achieve this goal by allowing a very flexible structure with no (or very little) bookkeeping. They react to user operations with local structural readjustments, attempting to bring the data structure into a more favorable state for future operations.  Examples include the \emph{splay tree}~\cite{ST85}, a self-adjusting binary search tree, and the \emph{pairing heap}~\cite{FSST86}, a self-adjusting heap. 

A pairing heap is a rooted, ordered tree (the children of each node are ordered) whose nodes are the heap items, arranged in (\emph{min}-)\emph{heap} order: the parent of a node $x$ has key no greater than that of $x$.  Heap order implies that the root has minimum key.  A \emph{delete-min} operation deletes the root, thereby making each of its children into a root.  These new roots retain the order they had when they were children of the deleted root.  The \emph{delete-min} combines the new roots into a single tree by \emph{linking} them two at a time.  One link operation combines two adjacent roots, making the one with larger key the new \emph{leftmost} child of the other.
In its original variant, the pairing heap combines roots by first linking pairs of adjacent roots (the first and second, the third and fourth, etc.) from left to right, and then linking the remaining roots consecutively from right to left. 
This variant performs all operations in $O(\lg n)$ amortized time~\cite{FSST86}.   Although it was originally conjectured that pairing heaps match the theoretical guarantees of Fibonacci-heaps, Fredman~\cite{FredmanLB} and Iacono and {\"O}zkan~\cite{IaconoOzkan} showed that the amortized time of \emph{decrease-key} in pairing heaps is $\Omega{(\lg\lg{n})}$, and that the same lower bound holds for broader classes of self-adjusting heaps. %

The exact amortized complexity of the \emph{decrease-key} operation in pairing heaps remains unknown~\cite{IaconoPairing, PettiePairing, DKKPZ}\footnote{The lack of structure of self-adjusting data structures makes their analysis both difficult and interesting.}. In practice, however, pairing heaps have been found to dominate Fibonacci heaps and other theoretically optimal heaps across a wide range of experimental settings, and for some specific types of workloads (e.g.\ with a high frequency of \emph{decrease-key} operations) they even compare favorably with implicit heaps~\cite{LarkinSenTarjan}. The pairing heap can thus be considered a ``robust choice''~\cite{SMDD19} in a variety of applications.

In addition, pairing heaps and other self-adjusting data structures hold the promise of \emph{adaptivity}, i.e.\ the ability to take advantage of regularities or biases in the usage pattern, which may give improved performance in specific scenarios. 
The adaptivity of splay trees 
has inspired a long line of research and is the subject of the \emph{dynamic optimality conjecture}~\cite{ST85, SplayNew}. Adaptivity in heaps is relatively less studied (see e.g.~\cite{KS19} and references therein).

\subparagraph*{Smooth heaps and slim heaps.} 

The recently introduced \emph{smooth heap}~\cite{KS19} is a self-adjusting heap with a structure similar to that of the pairing heap. Like pairing heaps, smooth heaps are built of rooted, ordered trees combined by links. The two data structures differ however, both in the order in which they do links, and in the implementation of the linking primitive itself. Smooth heaps use \emph{stable linking}: when linking two adjacent roots, the one with larger key becomes the leftmost---or the \emph{rightmost}---child of the other, depending on the original left-to-right order of siblings. %

The smooth heap was obtained from a correspondence between heaps and binary search trees. In sorting mode ($n$ insertions followed by $n$ delete-min operations), the smooth heap has the same asymptotic running time as Greedy BST, a self-adjusting binary search tree strategy\footnote{The correspondence is subtle: going from the smooth heap to Greedy BST involves \emph{inverting} and \emph{reversing} the permutation that is being sorted.}. Greedy BST was proposed as a candidate  \emph{dynamically optimal} search tree implementation~\cite{Luc88, Mun00, DHIKP09}. In its original form, Greedy BST needs information about \emph{future} queries. Remarkably, this requirement can be removed (with a constant factor slowdown)~\cite{DHIKP09}.  Nonetheless, Greedy BST should be seen as a theoretical ``proof of concept'' that is largely impractical, as compared to splay trees, which are simple and widely used in practice. Surprisingly, despite its correspondence to Greedy BST, the smooth heap is simple and easy to implement.  In particular, operations do not require knowledge of the future. The pointer structure and low-level complexity of operations in smooth heaps are comparable to those in pairing heaps. 

In~\cite{KS19}, several bounds were shown for the time of operations on smooth heaps used in sorting mode, by transferring results known for Greedy BST. %
The correspondence with Greedy BST breaks down when additional operations are supported, or even when \emph{insert} and \emph{delete-min} operations are intermixed. The smooth heap may be a viable general-purpose data structure, but so far it has lacked a complete, self-contained analysis that covers unrestricted operations.
 
In this paper we provide such an analysis.  We begin by describing the implementation of all standard heap operations, some of which were not fully specified in~\cite{KS19}.  We also describe and analyze the variant of smooth heaps that uses unstable (classical) linking, which we call the \emph{slim heap}.  The bounds we obtain for smooth heaps and slim heaps match the best known bounds for any self-adjusting heap, and differ from those of Fibonacci heaps only in the $O(\lg\lg{n})$ bound (versus $O(1)$) for \emph{decrease-key} (see Figure~\ref{fig:times}). Despite the similarities between pairing heaps, smooth heaps, and slim heaps, the analysis of pairing heaps does not directly transfer to smooth heaps or slim heaps, since the behaviours of the data structures differ. Indeed, even minor variants of the standard pairing heap are difficult to analyze, with open questions remaining.  See e.g.\ \cite{DKKPZ}.  On the other hand, our analysis of smooth heaps and slim heaps does use a variant of the potential function used to analyze pairing heaps.  

Our $O(\lg\lg n)$ bound for \emph{decrease-key} in smooth heaps and slim heaps matches the lower bounds of Fredman~\cite{FredmanLB} and Iacono and \"Ozkan~\cite{IaconoOzkan}, but our implementation, which is based on Elmasry's~\cite{Elmasry17} implementation for pairing heaps, violates the assumptions of these bounds, so they do not in fact apply to our algorithm. It remains open whether these lower bounds can be extended to encompass an Elmasry-type implementation.  A related question is whether there is \emph{any} self-adjusting heap with an $O(\lg\lg{n})$ bound for \emph{decrease-key} that satisfies the assumptions of these lower bounds.  The \emph{sort heap}~\cite{IaconoOzkan} satisfies the assumptions of the Iacono-\"{O}zkan bound, but its analysis is flawed~\cite{IStalk}, \cite[\S\,6]{Hartmann_thesis}.  In this paper, we give a simple implementation of \emph{decrease-key} in smooth heaps and slim heaps based on the original implementation for pairing heaps. This implementation does satisfy the assumptions of Iacono and \"Ozkan's lower bound, and two authors of this paper have shown that the simple decrease-key also takes $O(\lg\lg n)$ time for both smooth and slim heaps~\cite{STdecrease}, with all other operations unchanged, matching the lower bound to within a constant factor.

We complement our theoretical results with a brief experimental study (Section~\ref{sec:exp}). Since the pairing heap was previously found to have good experimental performance and has been extensively compared with other heaps, we limit ourselves to comparing smooth heaps and slim heaps with pairing heaps.   
Our initial results show that smooth heaps and slim heaps are competitive with pairing heaps, outperforming pairing heaps in some cases, particularly for some structured usage patterns.  This aligns with the adaptive properties shown previously in an asymptotic sense for smooth heaps.  Our results suggest the smooth heap or slim heap as a drop-in alternative heap in applications where the pairing heap works well.

\section{Smooth heaps and slim heaps}\label{sec:smooth}

In this section, we introduce the smooth heap and its variant, the slim heap, and state our efficiency bounds.
Both of these data structures store a collection of nodes, each node having a real-valued key (not necessarily distinct for different nodes).
They support the following standard heap operations:
\bi
\item \emph{make-heap}($h$): Create a new, empty heap $h$.
\item \emph{find-min}($h$): Return a node of smallest key in heap $h$, or null if $h$ is empty.
\item \emph{insert}($x, h$): Insert node $x$ with predefined key into heap $h$.  Node $x$ must be in no other heap.
\item \emph{delete-min}($h$): Delete from $h$ the node that would be returned by \emph{find-min}($h$).
\item \emph{meld}($h, h'$): Meld node-disjoint heaps $h$ and $h'$.
\item \emph{decrease-key}($x, k, h$): Decrease to $k$ the key of node $x$ in heap $h$, assuming that $x$ is a node in $h$ and $k$ is no larger than the current key of $x$.
(Operation \emph{decrease-key} is given a pointer to node $x$ in heap $h$, not just its key or some other identifier.)
\item \emph{delete}($x, h$): Delete node $x$ from heap $h$, assuming that $x$ is a node in heap $h$. Again, \emph{delete} is given a pointer to $x$ in $h$.
\ei

\begin{figure}[!h]
\begin{center}
\begin{tabular}{| c || c | c | c | c |}
\hline
& \emph{insert} & \emph{delete-min} & \emph{meld} & \emph{decrease-key}\\
\hline
\textbf{Smooth Heaps} (this paper) & $O(1)$ & $O(\lg n)$ & $O(1)$ & $O(\lg\lg{n})$\\
\hline
\textbf{Slim Heaps} (this paper) & $O(1)$ & $O(\lg n)$ & $O(1)$ & $O(\lg\lg{n})$\\
\hline
Fibonacci Heaps~\cite{Fibonacci} & $O(1)$ & $O(\lg n)$ & $O(1)$ & $O(1)$\\
\hline
Pairing Heaps~\cite{FSST86, StaskoVitter, IaconoPairing} & $O(1)$ & $O(\lg n)$ & $O(1)$ & $O(\lg n)$ \\
Alternative Analysis (Pettie~\cite{PettiePairing}) & $O(4^{\sqrt{\lg\lg{n}}})$ & $O(\lg n)$ & $O(4^{\sqrt{\lg\lg n}})$ & $O(4^{\sqrt{\lg\lg n}})$\\
Elmasry Pairing Heaps~\cite{Elmasry17}\footnotemark & $O(1)$ & $O(\lg n)$ & $O(\lg\lg n)$ & $O(\lg\lg{n})$\\
\hline
\end{tabular}
\end{center}
\caption{The best time bounds known for smooth, slim, Fibonacci, and pairing heaps.
All bounds are amortized. Smooth heaps are competitive with the best variants of pairing heaps.
They fall short of Fibonacci heaps on \emph{decrease-key} (though self-adjusting heaps often perform better in practice~\cite{LarkinSenTarjan}).
In all cases, \emph{make-heap} and \emph{find-min} take $O(1)$ time, and \emph{delete} takes $O(\lg n)$ time.
Strict Fibonacci heaps~\cite{BLT12} achieve the same bounds as Fibonacci heaps, but without amortization.}
\label{fig:times}
\end{figure}
\footnotetext{Elmasry~\cite{Elmasry17} also claims a \emph{meld} time of $O(1)$ with a \emph{delete-min} time of $O(\lg n + \lg\lg{N})$, where $N$ is the number of items in all heaps.}

\subsection{Efficiency}

Figure~\ref{fig:times} states our time bounds for smooth heaps and summarizes known results for some competing heaps.  In stating these bounds, we assume that $n \geq 4$, so that $\lg\lg{n} \geq 1$: if $n < 4$ all operations take $O(1)$ time.  We shall prove the following theorem:

\begin{theorem}
\label{thm:unweighted}
In a smooth heap or slim heap, \emph{delete-min} and \emph{delete} take $O(\lg n)$ amortized time, \emph{decrease-key} takes $O(\lg\lg{n})$ amortized time, and \emph{make-heap}, \emph{find-min}, \emph{insert}, and \emph{meld} take $O(1)$ time, where there are $n$ nodes in the heap at the time of operation.
\end{theorem}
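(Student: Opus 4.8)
The plan is to run an amortized analysis via the potential method: define a potential $\Phi$ on heap states so that, for every operation, (actual cost) $+ \Delta\Phi$ is bounded as claimed. Because \emph{meld} must be $O(1)$, a heap is represented lazily as a forest of heap-ordered (smooth- or slim-) trees together with a pointer to a root of minimum key; \emph{make-heap}, \emph{find-min}, \emph{insert}, and \emph{meld} only touch this root list and the min-pointer. The potential has three ingredients: a logarithmic term $c_1\sum_v \lg s(v)$ over all nodes $v$, where $s(v)$ is the number of descendants of $v$ including $v$ (this is the promised variant of the pairing-heap potential); a term $c_2\cdot(\text{number of trees in the forest})$, penalizing fragmentation; and a term accounting for the auxiliary structure used by \emph{decrease-key}. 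The constants $c_1,c_2$ (and the auxiliary constant) are pinned down at the end so that all inequalities close.

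For the cheap operations this is immediate: \emph{make-heap} and \emph{find-min} have $O(1)$ actual cost and leave $\Phi$ unchanged; an \emph{insert} adds a single-node tree, leaving the logarithmic term unchanged ($\lg 1 = 0$) and raising the tree-count term by $c_2$, hence $O(1)$ amortized; a \emph{meld} concatenates two root lists and updates the min-pointer, changing no subtree and preserving the total number of trees, hence $O(1)$ amortized.

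The heart of the argument is \emph{delete-min}. It removes the min-root $r$, turning $r$'s children into roots, then runs the restructuring pass that combines all current roots into one tree; its actual cost is proportional to the number $m'$ of trees present when the pass begins. The key point is that every link reduces the number of trees by exactly one, so over the whole pass the tree-count term falls by $c_2(m'-1)$, which for $c_2$ large enough pays for all $m'-1$ links plus the $O(1)$ per-link overhead. What remains is to bound the \emph{increase} of the logarithmic term caused by the pass, and this is where the structure of smooth and slim heaps enters. I would prove a structural lemma showing that the (common) restructuring pass — stack-based, using stable linking in the smooth heap and classical linking in the slim heap — arranges the combined roots into a sufficiently balanced tree that $\sum_v \lg s(v)$ grows by only $O(\lg n)$ beyond what the tree-count drop already absorbs; the proof tracks the precise order in which consecutive roots are compared and linked and invokes the concavity of $\lg$ (an inequality of the form $\lg a + \lg b \le 2\lg\tfrac{a+b}{2}$ applied along the pass). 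Granting this, \emph{delete-min} is $O(\lg n)$ amortized. This lemma is the main obstacle: the pass differs from the pairing heap's two-pass scheme, it must be handled for both linking conventions, and the bookkeeping that matches the logarithmic growth against the tree-count decrease is delicate — it is exactly the place where the good, Greedy-BST-like behaviour of the restructuring has to be made quantitative.

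Finally, \emph{delete}($x,h$) is realized as a \emph{decrease-key} of $x$'s key to $-\infty$ followed by a \emph{delete-min}, inheriting the $O(\lg n)$ bound. For \emph{decrease-key} I would use the implementation adapted from Elmasry's pairing heaps: cut out the subtree rooted at $x$, repair the constant amount of local structure around the hole, and — instead of re-linking the cut subtree immediately — place it in a bounded auxiliary buffer whose trees are consolidated lazily (the consolidation work is folded into later \emph{delete-min} operations). Extending $\Phi$ with a term for this buffer, calibrated so that a \emph{decrease-key} does $O(1)$ immediate work while charging $O(\lg\lg n)$ for the deferred consolidation (the $\lg\lg n$ coming from the size/shape bound maintained on the buffer), yields the $O(\lg\lg n)$ amortized bound; one then checks that adding the buffer term does not disturb the bounds already established for the other operations, and fixes the constants consistently to finish.
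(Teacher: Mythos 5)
Your high-level architecture (potential method, lazy forest with a min-pointer, Elmasry-style buffer for \emph{decrease-key}, \emph{delete} realized as \emph{decrease-key} to $-\infty$ followed by \emph{delete-min}) matches the paper, but the potential you propose cannot prove the \emph{delete-min} bound, and the structural lemma you defer to is false. Concretely, take $\Phi = c_1\sum_v \lg s(v) + c_2\cdot(\#\text{trees})$ with $s(v)$ the ordinary subtree size, and consider inserting keys so that the root list becomes $0,1,2,\dots,n$, followed by one \emph{delete-min}. The locally-maximum linking rule (in either heap) turns the strictly increasing sequence $1,\dots,n$ into a single path, so $\sum_v \lg s(v)$ rises from $0$ to $\lg(n!)=\Theta(n\lg n)$, while the tree-count term falls by only $c_2(n-1)$ and the actual cost is $\Theta(n)$. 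Since each of the $n{+}1$ inserts is $O(1)$ amortized under your $\Phi$, telescoping forces the amortized cost of this single \emph{delete-min} to be $\Theta(n\lg n)$; no choice of $c_1,c_2$ rescues this. The restructuring pass does not produce a balanced tree, and concavity of $\lg$ applied to subtree sizes along the pass cannot bound the growth of the logarithmic term by $O(\lg n)$.

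The two ideas you are missing are exactly what makes the paper's argument work. First, a structural lemma: under leftmost locally maximum linking, each node wins at most one left link and at most one right link per \emph{delete-min}, hence acquires at most two new children (pairing heaps lack this property). Second, the potential of a non-root $x$ is $\lg$ of its \emph{mass} --- the sum of the subtree sizes of $x$ and of all siblings linked to their common parent before $x$ --- rather than $\lg$ of its size, and the one or two most recently acquired children of each node (first and second in link order for slim heaps; extreme left and right children for smooth heaps) are exempted and carry potential $0$, while each root carries $2+2\lg(\text{size})$. With this design, a link won by $u$ costs only the promotion of one previously exempt child from $0$ to at most $\lg(\text{size}(u))$, which by the structural lemma happens at most twice per node per \emph{delete-min}; and deleting the min-root releases a large amount of potential because the masses of its children telescope. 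A potential-shifting argument along the link order of the new roots, combined with the inequality $2\lg(a+b)\ge \lg a+\lg b+2$, then yields the $O(\lg n)$ bound. Your buffer-based \emph{decrease-key} and the reduction of \emph{delete} are fine in outline, but both sit on top of the \emph{delete-min} analysis, so the gap above is fatal as the proposal stands.
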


In some applications, a large number of nodes may be inserted and then never accessed, for example if one inserts many nodes with large keys, and then repeatedly inserts and deletes a few
nodes with small keys. For this case, we shall prove the following theorem, which states that nodes that are never deleted do not slow down deletions:

\begin{restatable}{theorem}{restatethma}\label{thm:remain}
Operations \emph{delete-min} and \emph{delete} take $O(\lg t)$ amortized time, with the other amortized bounds unchanged, where $t$ is the number of nodes in the heap at the time of the operation that will be deleted in the future.
\end{restatable}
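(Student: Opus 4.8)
The plan is to rerun the amortized analysis behind Theorem~\ref{thm:unweighted} with a \emph{weighted} potential, choosing weights so that nodes which are never deleted contribute essentially nothing. I would fix the entire operation sequence in advance (an offline potential is legitimate for an amortized bound) and call a node \emph{relevant} if it is removed by some \emph{delete} or \emph{delete-min} at some point in the sequence. At the moment of any operation, the relevant nodes currently in a heap are precisely the nodes that will be deleted in the future, so at the time of a deletion their number is exactly the quantity $t$ of the statement. For a node $x$, let $c(x)$ be the number of relevant nodes in the subtree rooted at $x$, and define its rank $r(x) = \lg \max(1, c(x))$. I would replace every occurrence of $\lg(\text{subtree size})$ in the potential of Theorem~\ref{thm:unweighted} by this weighted rank, leaving any auxiliary terms (for example those used in the \emph{decrease-key} analysis) as they are. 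Since $r(x) \ge 0$, the potential is still nonnegative and zero on an empty heap, so the amortized bounds still imply the per-operation bounds.

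Next I would re-examine each operation. \emph{Make-heap} and \emph{find-min} are untouched. For \emph{insert}, the new node is a singleton root, so it has rank $0$ and no ancestors whose counts change, and the amortized cost is still $O(1)$. For \emph{meld}, the surviving root's count increases by the number of relevant nodes in the other heap, and the bound on the resulting potential increase is governed by exactly the inequality used for \emph{meld} in the proof of Theorem~\ref{thm:unweighted}, which only uses monotonicity and concavity of $\lg$ applied to the (now weighted) counts; so \emph{meld} remains $O(1)$. For \emph{decrease-key}, the restructuring and its analysis are unchanged except that each $\lg n$ is now at most $\lg(\text{number of relevant nodes in the heap}) \le \lg n$, so the bound stays $O(\lg\lg n)$. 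The main case is \emph{delete-min}: the proof of Theorem~\ref{thm:unweighted} bounds its amortized cost through the rank of the removed root together with telescoping sums of ranks of roots linked during consolidation; replacing ranks by weighted ranks turns each such bound into $O(\lg W)$, where $W$ is the total weight in the heap, and here $W = \max(1, c(\mathrm{root})) = O(t)$ because the root's subtree is the whole heap and contains $t$ relevant nodes once the root is gone (the off-by-one is absorbed by the $\max$ with $1$). Hence \emph{delete-min} is $O(\lg t)$. \emph{Delete} of an arbitrary node is handled the same way: its cost is dominated by the links it performs within the affected part of a single tree, whose weighted size is at most $W \le t$; or, if \emph{delete} is realized as \emph{decrease-key} to $-\infty$ followed by \emph{delete-min}, by a combined analysis of the two.

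The only place the argument is more than bookkeeping — and the step I expect to be the main obstacle — is verifying that the combinatorial core of the \emph{delete-min} and \emph{decrease-key} analyses is genuinely scale-invariant in the node weights: that it uses only that ranks are $\lg$ of the integer counts of relevant descendants, that these counts are subadditive along the tree, and the standard $\lg a + \lg b \le 2\lg(a+b)$-type inequalities, and never any property special to unit weights (such as a subtree's size being at least the number of nodes it contains). In particular one must check that the introduction of rank-$0$ ``light'' subtrees, consisting entirely of irrelevant nodes, does not break the telescoping used to charge the linking steps: a light node acquires positive rank only when a relevant node is linked into its subtree, and that link is itself one of the steps already being paid for, so the accounting survives. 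Once this is checked against the concrete potential of Theorem~\ref{thm:unweighted}, the stated bounds follow.
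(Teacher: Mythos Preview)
Your choice of potential---replacing sizes by the count of ``relevant'' (what the paper calls \emph{temporary}) descendants---is exactly what the paper does. But the claim that the \emph{delete-min} analysis is ``scale-invariant in the node weights'' is where the plan breaks, and it is not a matter of bookkeeping. The telescoping step in Section~\ref{bounds} relies on the inequality $2\lg(\size(u)+\mass(v)) \ge \lg\size(u)+\lg\mass(v)+2$, and that ``$+2$'' is precisely what pays the actual cost of the link $u$ wins. This inequality needs $\size(u)\ge 1$. If $u$ is a permanent new root with no temporary descendants, then its weighted size is $0$, $\mass(u)=\mass(v)$, and you recover only $2\lg\mass(u)\ge 2\lg\mass(v)$: no surplus at all. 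So a long run of permanent children of the deleted root can win many links with no potential drop to cover them. Your sentence ``a light node acquires positive rank only when a relevant node is linked into its subtree, and that link is itself one of the steps already being paid for'' addresses potential \emph{increases}, not the missing $+2$ for actual time; those are different line items.

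The paper closes this gap not by tweaking the potential but by a separate global counting argument that you omit: links \emph{won by permanent nodes} are charged outside the potential. Once any node becomes the child of a permanent node, it can only be detached by a \emph{decrease-key} (its parent is never deleted), so across the whole sequence the number of such links is at most one per \emph{insert} plus two per \emph{decrease-key}. Charging these $O(1)$ costs to \emph{insert} and \emph{decrease-key} leaves only links won by \emph{temporary} nodes to be handled by the weighted potential, and for those the telescoping goes through because every participating $u$ has weighted $\size(u)\ge 1$. You need to add this argument; after that, restricting the shifting in Section~\ref{bounds} to temporary roots gives $O(\lg t)$ exactly as you intend. (A minor side note: your description of \emph{meld} is off---no root's subtree changes during \emph{meld}, the root lists are merely catenated---but that operation is trivially $O(1)$ under either potential.)
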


\subsection{Data structures and terminology}
\label{sec:structure}

A smooth heap or slim heap consists of a forest of rooted, ordered trees whose nodes are the nodes of the heap.  Each tree is (min-)heap ordered: the key of a non-root node is no less than that of its parent.
The roots of the trees are stored in a list, with a root of minimum key, the \emph{min-root}, first.  Each node stores a list of its children.
For both the root list and the lists of children, we identify the front of the list with ``left'' and the back of the list with ``right'', so that nodes early in the list are considered left of nodes later in the list.

The forest is altered by \textit{linking} pairs of \emph{adjacent} nodes in the list of roots or in a list of children.  A link makes the node with smaller key (the \emph{winner} of the link) the parent of the node of larger key (the \emph{loser} of the link).  If the keys are equal, the node on the right is the winner.  The link is a \emph{left link} if the loser is originally left of the winner, a \emph{right link} if the loser is originally right of the winner. 

The only difference between smooth and slim heaps is the position of the loser in its list of new siblings after a link. (See Figure~\ref{fig:link-types}.)
Slim heaps use \textit{one-sided} links: the loser of the link becomes the new leftmost child of the winner.  This is the type of linking used in Fibonacci heaps, pairing heaps, and many other similar data structures.

\begin{figure}[!h]
\begin{center}
\includegraphics[scale=0.3]{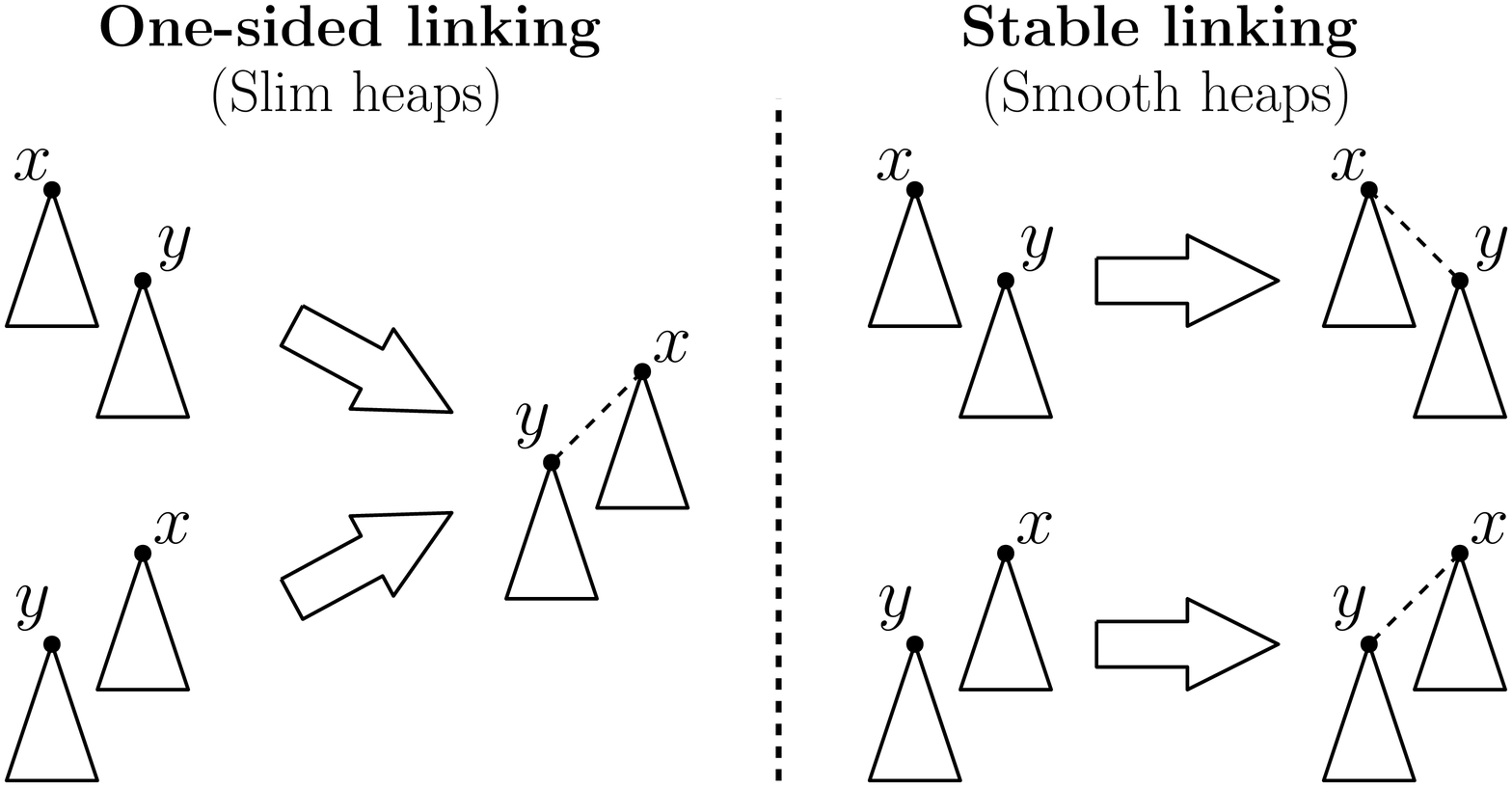}
\caption{The result of a link of adjacent nodes $x$ and $y$, where $x$ has the smaller key, using one-sided linking (left) and stable linking (right). A one-sided link always adds the child on the left, whereas a stable link preserves left-to-right order. Slim heaps use one-sided links, smooth heaps use stable links. In both smooth and slim heaps, all links are between adjacent nodes.}
\label{fig:link-types}
\end{center}
\end{figure}

Smooth heaps use \textit{stable} links: the loser becomes the new leftmost or rightmost child of the winner, depending on whether the link is a left link or a right link.  Stable links maintain the left-to-right order of nodes (although \emph{insert}, \emph{meld}, and \emph{decrease-key} operations perturb it, as we shall see).
A child of a node is either a \textit{left} child or a \textit{right} child, depending on whether it was left or right of its parent just before they were linked.  In a smooth heap all left children in a list of children precede all right children; in a slim heap, left and right children are in general intermixed.

\subsection{Operations}\label{sec:operations}

In this subsection we describe the implementation of all the heap operations except \emph{decrease-key} and \emph{delete}; we cover these in Section~\ref{sec:decrease-key}.  The description of smooth heaps in~\cite{KS19} did not include a find-min operation, nor was the implementation of \emph{decrease-key} fully specified. In addition to adding these operations, we have changed some details of the original presentation.

As noted in the last subsection, slim heaps differ from smooth heaps only in the linking method. The following descriptions apply to both smooth heaps and slim heaps if links are done using the appropriate linking method, stable for smooth heaps, one-sided for slim heaps.

We store the root list of a heap in a circular singly-linked list, with the min-root first (leftmost).  Access is via the min-root.  Circular linking supports melding in $O(1)$ time.  We store each list of children in a singly-linked list, circular for smooth heaps.  In a slim heap, access to a list of children is via the leftmost child.  In a smooth heap, access to such a list is via the \emph{rightmost} child; circular linking allows a new child to be added as the leftmost or rightmost in $O(1)$ time.  Each node needs two pointers: to its right adjacent root or sibling, and to its leftmost (slim heap) or rightmost (smooth heap) child.  To support \emph{delete} and \emph{decrease-key} we shall add a third pointer per node.  See Section~\ref{sec:decrease-key}.

\bi
\item \emph{make-heap}($h$): Create and return an empty heap.

\item \emph{find-min}($h$): Return the min-root of $h$.

\item \emph{insert}($x, h$): Make $x$ into a one-node tree and insert $x$ into the root list of $h$, in first or second position depending on whether its key is less than or not less than that of the min-root of $h$; in the former case $x$ becomes the new min-root.

\item \emph{meld}($h, h'$): Catenate the root lists of $h$ and $h'$; set the min-root of the melded heap to that of $h$ or $h'$, whichever has smaller key.

\item \emph{delete-min}($h$): Delete the min-root $x$ of $h$.  Replace $x$ in the list of roots of $h$ by the list of children of $x$, in the order they occur in the list of children.  These \emph{new roots} precede all the undeleted (\emph{old}) roots in the root list.  Repeatedly link pairs of adjacent roots until there is only one root remaining, using the \emph{leftmost locally maximum linking rule} given below.  Make the remaining root the min-root of $h$.  

\ei

The main novelty in smooth heaps and slim heaps is the leftmost locally maximum linking rule used in \emph{delete-min}.  It is: find the leftmost node $v$ in the root list whose key is no less than those of both adjacent nodes, say $u$ and $w$, and link $v$ with whichever of $u$ and $w$ has larger key, breaking a tie in favor of the node left of $v$.  As special cases, if the leftmost root has key no less than that of the right adjacent root, link these roots (the right one is the winner); if the keys of all the roots are in strictly increasing order left-to-right, link the rightmost root with the left adjacent root (the left one is the winner).  One can eliminate the special cases by adding dummy leftmost and rightmost nodes, both with key minus infinity, and applying the leftmost locally maximum linking rule until there is only one non-dummy root.)

To implement leftmost locally maximum linking, start at the leftmost root and proceed rightward until finding a root $v$ whose key is not less than that of its right adjacent root $w$, or until reaching the rightmost root.  In the latter case, link the two rightmost roots and repeat until there is only one root, completing the linking.  In the former case, link $v$ with whichever of the adjacent roots has higher key, choosing the left adjacent node in case of a tie and linking with $w$ (the right adjacent node) if $v$ is the leftmost root.  Then repeat, but starting from the winner of the link: all roots preceding this winner are in strictly decreasing order by key, left to right, and hence none is the leftmost local maximum.

Leftmost locally maximum linking \emph{treapifies} the list of roots: it arranges the roots into a binary tree symmetrically ordered by root list order and heap-ordered by key, with ties broken in favor of nodes on the right: if $x$ is a node, its new left child, if any, has no smaller key, and its new right child, if any, has strictly greater key.

In both linking and in the definition of ``leftmost local maximum'' our tie-breaking rule in key comparisons is that the node on the right is treated as having smaller key.  Other tie-breaking rules work equally well: what is required is consistency (the outcome of a key comparison between two nodes must remain the same throughout a \emph{delete-min}) and transitivity (if $x$, $y$, and $z$ have equal keys and the tie-breaking rule declares that $x$ has smaller key than $y$ and $y$ has smaller key than $z$, then it must declare that $x$ has smaller key than $z$).  An alternative tie-breaking rule with these properties is to break ties by node identifier.  The tie-breaking rule can depend on the keys, the nodes, and the position of the nodes in the root list.  Leftmost locally maximum linking is easy to implement, but a more general linking rule does the same links if the tie-breaking rule is the same: Find \emph{any} locally maximum root $v$ (a root with key greater than those of both adjacent roots); link it with the adjacent node of greater key; repeat until one root remains.

Smooth heaps and slim heaps are efficient for two reasons.  The first is that locally maximum linking guarantees that a node acquires at most two new children during a \emph{delete-min} operation (Lemma~\ref{lem:binary}).   Pairing heaps do not have this property: in the second, right-to-left linking pass during a pairing heap \emph{delete-min}, a node can acquire an arbitrary number of new children (see Figure~\ref{fig:stable-link}).  The second is that insertions and melds are lazy (they do no links), but \emph{delete-min} operations are eager (they do as many links as possible).  Pairing heaps are eager: except in the middle of an operation, a pairing heap consists of a single tree, and insertions and melds are done by single links.  We do not know whether it is possible to obtain our bounds for a single-tree smooth heap or pairing heap.   

\begin{figure}[!ht]
\begin{center}
\includegraphics[scale=0.35]{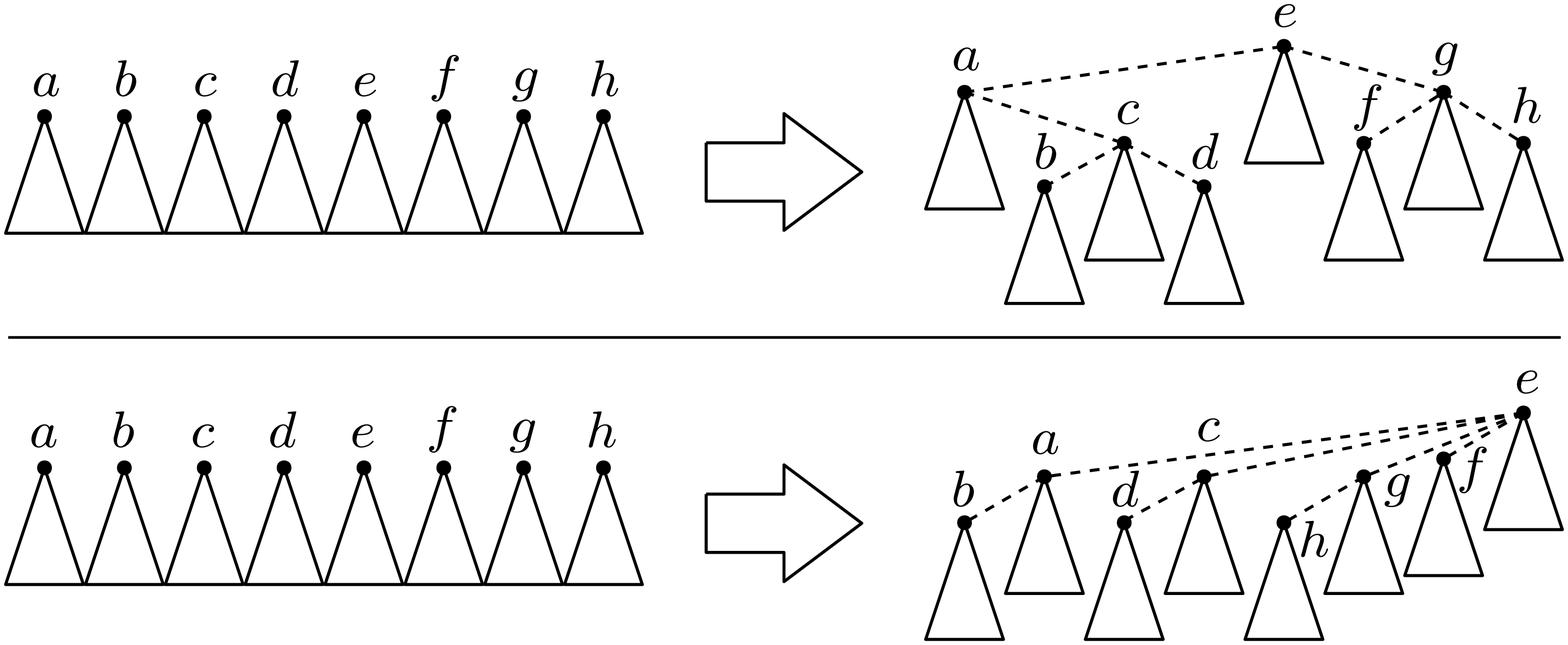}
\caption{A possible linking of roots during a \emph{delete-min} in a smooth heap (above) and a pairing heap (below), assuming key order $e, a, g, c, b, d, f, h$ from smallest to largest. }
\label{fig:stable-link}
\end{center}
\end{figure}

\begin{lemma}\label{lem:binary}
During a \emph{delete-min}, a root wins at most one left link and at most one right link.  Hence during a \emph{delete-min} each node acquires at most one new left child and at most one new right child. 
\end{lemma}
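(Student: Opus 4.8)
The plan is to follow a single root through one \emph{delete-min} and to exploit the left-to-right structure of the leftmost locally maximum linking rule. Recall that the rule scans the root list from the left, stopping at the first root whose key is not less than that of its right neighbour (or at the rightmost root), performs one link, and then resumes the scan \emph{from the winner} of that link. The first step is to prove the invariant that, at every moment during the linking, the roots from the leftmost one up to and including the current scan position form a strictly increasing sequence by key (using the fixed, consistent tie-breaking order to compare equal keys). This holds at the start (the prefix is a single root) and is clearly preserved while the scan moves rightward; for a link one checks the three cases --- the local maximum $v$ is linked with its left neighbour (a right link won by that neighbour), $v$ is linked with its right neighbour (a left link won by that neighbour), or $v$ is the rightmost root and is linked with the root to its left (a right link) --- and in each case verifies that the new resume point together with the roots to its left is again strictly increasing.

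Next I would settle right links. Suppose a root $z$ wins a right link. Then $z$ is the left neighbour of the just-linked local maximum $v$ and was preferred over $v$'s right neighbour $w$, so the key of $z$ is no less than that of $w$ (or $v$ was the rightmost root, and there is no $w$). After the link, $v$ is gone and $w$, if it exists, is $z$'s new right neighbour with the key of $z$ no less than that of $w$; the scan resumes from $z$. By the invariant $z$'s left neighbour has strictly smaller key, and its right neighbour has no larger key, so $z$ is itself a leftmost local maximum --- or, if $z$ has become the leftmost or rightmost root, the corresponding special case of the rule applies --- and in every case $z$ is the loser of the very next link. Hence after winning a right link $z$ immediately loses and plays no further part, so a root wins at most one right link.

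The delicate case is left links, because a left-link winner need not be done: the scan may simply pass over it. Suppose $z$ wins a left link, so $z$ was the right neighbour of the just-linked local maximum and the scan resumes from $z$; by the invariant, $z$'s left neighbour, if any, has strictly smaller key. If the scan now stops at $z$ (because $z$ is the rightmost root, or its key is not less than that of its right neighbour), then as in the previous paragraph $z$ is the loser of the next link and is done. Otherwise the scan moves past $z$, so $z$ now lies strictly to the left of the scan position, inside the strictly increasing prefix. One then argues that $z$ remains inside this prefix until it loses a link or wins a right link: the scan position can move leftward only one step at a time, and only when the local maximum found lies at the scan position and is linked with its left neighbour --- so the only way the scan position reaches $z$'s position is by a right link won by $z$ (after which $z$ loses on the next step), and while $z$ is strictly inside the prefix its right neighbour has larger key so $z$ is never a local maximum and never loses there. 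While $z$ is inside the prefix its left neighbour always has strictly smaller key and hence is never a local maximum, so $z$, which could win a left link only as the right neighbour of a local maximum, wins no further left link. Therefore a root wins at most one left link.

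Finally, a link is a left link precisely when the loser --- which becomes a child of the winner --- lay to the left of the winner, hence becomes a left child of it, and symmetrically for right links; and during a \emph{delete-min} a node gains children only by winning links among the roots. So the two claims above are exactly the two statements of the lemma. The one genuinely delicate point is the left-link case: a right-link winner is forced to lose on the very next step, whereas a left-link winner can survive, so ruling out a second left link requires the argument that such a root becomes, and stays, part of the strictly increasing prefix.
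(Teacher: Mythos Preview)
Your argument is correct and follows the same approach as the paper: both use the invariant that the roots up to the current scan position form a strictly increasing sequence, show that a right-link winner is immediately the next leftmost local maximum and so loses on the very next step, and for a left-link winner observe that its new left neighbour has strictly smaller key and hence can never be a local maximum. The paper's left-link argument is marginally more direct---it simply notes that this left neighbour is a \emph{fixed} node $u$ that remains adjacent to the winner until the winner itself loses, so no second left link (which would have to be with $u$) is possible---thereby avoiding your scan-position bookkeeping, but the content is the same.
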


\begin{proof}
Consider a link of the leftmost locally maximum node $v$ during a \emph{delete-min}.  Node $v$ is linked with either the left adjacent node $u$ or the right adjacent node $w$.  Suppose the link is with $u$. Then $u$ has key strictly less than that of $v$; otherwise, $u$ has locally maximum key and is left of $v$, contradicting the choice of $v$. The link is a right link won by $u$.  After this link, either $u$ is the rightmost node in the list or the node $w$ adjacent to $u$ on the right has no larger key, and the node adjacent to $u$ on the left, if any, has key strictly smaller than that of $u$.  This makes $u$ the leftmost local maximum, so $u$ will participate in the next link and lose it: if the link is with $w$, $w$ will win by the link tie-breaking rule if the keys of $u$ and $w$ are equal. Hence after the link of $u$ and $v$, $u$ cannot win another link, either left or right. 

Suppose the link of $v$ is with $w$.  Since the key of $w$ is no greater than that of $v$, $w$ wins the link, by the link tie-breaking rule if the keys of $v$ and $w$ are equal.  The link is a left link.  After the link, the node $u$ adjacent to $w$ on the left, if it exists, has key strictly smaller than that of $w$ by the tie-breaking rule for choosing whether to link $v$ with $u$ or with $w$. Node $u$ cannot become a leftmost local maximum until $w$ loses a link.  Hence $w$ cannot win another left link, since it would be with $u$.
\end{proof}

\section{Efficiency of operations}
\label{sec:analysis}

With the two-pointer-per-node representation in Section~\ref{sec:operations}, each \emph{make-heap}, \emph{find-min}, \emph{insert}, and \emph{meld} operation takes $O(1)$ time worst-case: catenation of the two root lists during \emph{meld} takes two pointer changes, and each link, whether one-sided or stable, takes O(1) time.  The time for a \emph{delete-min} is O(1) plus at most a constant times the number of links.  We normalize this time to be one plus the number of links. The number of links is one less than the number of roots, new and old, after deletion of the min-root.  We shall prove a bound of $O(\lg n)$ on the amortized time of \emph{delete-min} and of $O(1)$ on the amortized times of the other operations by using the \emph{potential method}~\cite{tarjan1985amortized}.   

The previous analysis of smooth heaps~\cite{KS19} was restricted to \emph{sorting mode}, and made use of a \emph{geometric view} of binary search trees~\cite{DHIKP09}. Our analysis is free of such restrictions and is self-contained.

\subsection{Potential method}

In the potential method, we assign to each state of the data structure a real-valued potential.  We define the \emph{amortized time} of an operation to be its actual time plus the potential of the structure after the operation minus the potential of the structure before it.  That is, the amortized time is the actual time plus the net increase in potential caused by the operation.  If we sum the amortized times of the operations in a sequence, the sum of the potential differences telescopes: the sum of the actual times of the operations equals the sum of their amortized times, plus the final potential (after the last operation), minus the initial potential (before the first operation).  If the initial potential is $0$ (corresponding to an empty data structure) and the final potential is non-negative, then the sum of the amortized times of the operations is an upper bound on the sum of their actual times, allowing us to use the former as a conservative bound on the latter.

\subsection{Definition of the potential}\label{sec-dfn-potential}

The analyses of slim heaps and smooth heaps differ slightly.  We develop them both at the same time and point out the differences.
    
For each node $x$ having children, we define the \emph{link order} of the children to be the order in which these children lost links to $x$, latest first, earliest last.  This is exactly the order of the children in the list of children of $x$ if the heap is a slim heap, but not necessarily if the heap is a smooth heap: in the latter, the link order is a merge of the left children in their order in the list of children of $x$ and of the right children in the reverse  of their order in the list of children of $x$.

We define the size $\size(x)$ of a node $x$ in a heap to be the number of its descendants, including itself. We define the {mass} $\mass(x)$ of a child $x$ to be the sum of the sizes of $x$ and of all its siblings after it in link order (those linked to their common parent before $x$).  (We do not need to define the masses of roots.)

We define the potential of a root $x$ to be $2+2\lg \size(x)$.  In a slim heap, we define the potential of a child $x$ to be $0$ if $x$ is the first or second child of its parent, $\lg \mass(x)$ otherwise (at least two children were linked to the parent of $x$ after $x$).  In a smooth heap, we define the potential of a child $x$ to be $0$ if it is the leftmost left child or rightmost right child in its sibling list, $\lg \mass(x)$ otherwise.  We define the potential of a collection of heaps to be the sum of the potentials of their nodes.

The slim heap potential is closely related to the one used in the original analysis of pairing heaps~\cite{FSST86}, which assigns $\lg \size(x)$ potential to each root $x$ and $\lg \mass(x)$ potential to each child $x$.  Here we give (at most) two children of a node zero potential.  This limits the increase in potential caused by a link in a \emph{delete-min} to the logarithm of the size of the winner \emph{before} the \emph{delete-min}.  Our analysis relies on this limit.

\subsection{Amortized bounds}\label{bounds}

The amortized time of a \emph{make-heap}, \emph{find-min}, \emph{insert}, or \emph{meld} is $O(1)$, since the worst-case time of each is $O(1)$, and only an \emph{insert} changes the potential, increasing it by $2$.

We shall show that the amortized time of a \emph{delete-min} is $O(\lg n)$.

Let $x$ be the min-root. After its deletion, the list of roots consists of new roots, those that were children of $x$, followed by old roots, those that were roots before the deletion of $x$.  See Figure~\ref{fig:old-new-roots}.  We separately analyze links won by old roots and links won by new roots.

\begin{figure}[!h]
\centering{
\includegraphics[scale=0.45]{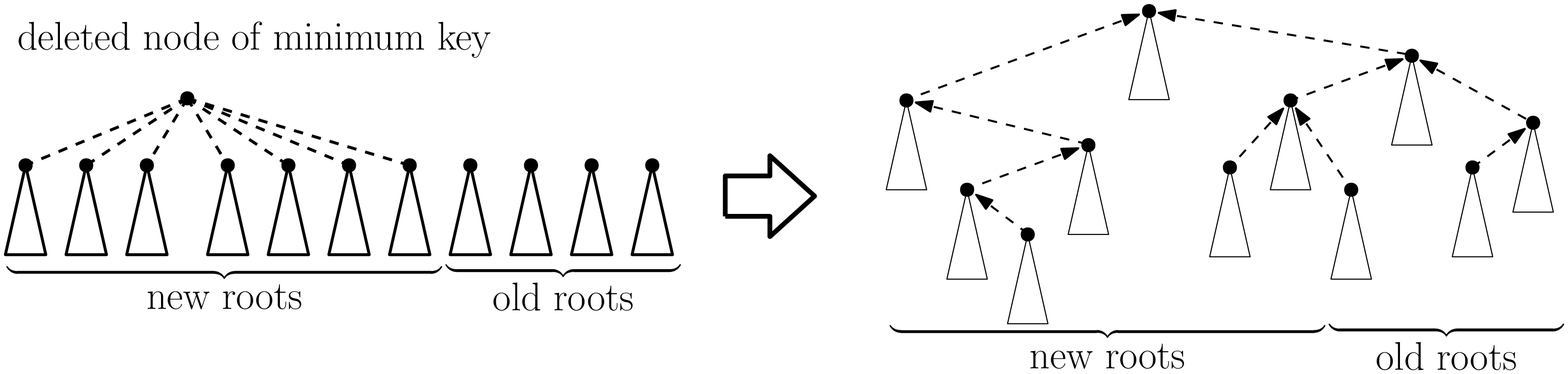}
}
\caption{The result of the delete-min operation. The root with smallest key is deleted, its children become the new roots in its place, and then all roots are linked in a binary tree.}
\label{fig:old-new-roots}
\end{figure}

Each old root $u$ has potential $2+2\lg \size(u)$, where $\size(u)$ is the size of $u$ just before the \emph{delete-min}.  This potential covers the actual time ($1$ or $2$) of the link or links that $u$ wins during the \emph{delete-min}, if there are any, plus the increase in the potential of children of $u$ when $u$ wins a link or two: if $v$ and $w$ respectively are the at most two children of $u$ that have potential zero before the \emph{delete-min}, then when $u$ wins one link, the potential of one of $v$ and $w$ increases from $0$ to $\lg \mass(w) < \lg \size(u)$, and when $u$ wins a second link the potential of the other of $v$ and $w$ increases from $0$ to $\lg \mass(v) < \lg \size(u)$.   

The heart of the analysis, the hard part, is that of links won by new roots.  Let the masses of the new roots be their masses before the deletion of $x$.  Before this deletion, $x$ has potential $2+ 2\lg \size(x)$.  The deletion of $x$ frees its potential.  We transfer $\lg \size(x)$ of this potential from $x$ to each of the at most two new roots with zero potential. Then every new root $u$ has a potential of at least $\lg \mass(u)$.

Each new root $u$ can acquire one or two new children by winning one or two links during the \emph{delete-min}.  By the argument we used for old roots, the increase in potential of the children of $u$ when $u$ wins a link is less than $\lg \size(u)$ per link, where $\size(u)$ is the size of $u$ just before the \emph{delete-min}.  We shall show that the sum of the base-two logarithms of the masses of the new roots, which is at most the sum of their potentials after $x$ is deleted and its freed potential is transferred, plus $2 + \lg \size(x)$ for a slim heap or $2 + 2\lg \size(x)$ for a smooth heap, is at least the sum of the increases in potential of the children of these roots, plus at least $2$ per new root that wins a link.  The extra $2$ pays for the actual time of the links won by new roots during the \emph{delete-min}.

The first step is to shift some of the potential of the new roots so that each new root $u$ has at least $\lg \mass(u)$ potential for each link it wins during the \emph{delete-min}.  This shifting is different for slim heaps and smooth heaps.  In slim heaps, for each new root $u$ except the last one in link order, if the root $v$ after $u$ in link order wins a left link, we shift $\lg \mass(u) \geq \lg \mass(v)$ from $u$ to $v$. See Figure~\ref{fig:shift-slim}.

\begin{figure}[h!]
\centering{
\includegraphics[scale=0.4]{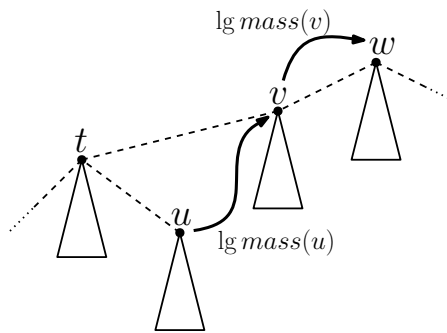}
}
\caption{An example of shifting potential among new roots in a slim heap. The dotted lines represent links made during delete-min, with the losers are pictured lower than the winners, and curved lines represent shifted potential. New root $u$ gives potential to its adjacent sibling $v$ because $v$ wins a left link with $t$ and $v$ gives potential to $w$ because $w$ wins a left link with $v$.}
\label{fig:shift-slim}
\end{figure}

\begin{lemma}\label{L-slim-heap-node-potential}
After the potentials are shifted in a slim heap, each new root $u$ has potential at least $\lg \mass(u)$ if it wins one link during the \emph{delete-min}, at least $2\lg \mass(u)$ if it wins two. 
\end{lemma}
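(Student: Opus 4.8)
The plan is to track the potential of each new root through the shift step and then case on how many links it wins. Order the new roots $u_1, u_2, \dots, u_k$ by link order, which in a slim heap is the same as their left-to-right order in the root list (and they all precede the old roots). Their masses strictly decrease, $\mass(u_1) > \mass(u_2) > \cdots > \mass(u_k)$, with $\mass(u_1) = \size(x) - 1 < \size(x)$; hence $\lg \mass(u_{i-1}) \ge \lg \mass(u_i)$ for $i \ge 2$ and $\lg \mass(u_i) < \lg \size(x)$ for every $i$. After the freed potential of $x$ is transferred, the resulting potential $p_i$ of $u_i$ satisfies $p_i \ge \lg \mass(u_i)$; in fact $p_1 = p_2 = \lg \size(x)$ (the at most two children of $x$ that had zero potential, each having received $\lg \size(x)$) and $p_i = \lg \mass(u_i)$ for $i \ge 3$. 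Since the shift moves $\lg \mass(u_i)$ from $u_i$ to $u_{i+1}$ exactly when $u_{i+1}$ wins a left link, the shifted potential of $u_i$ is
\[ p_i' = p_i + [\,u_i \text{ wins a left link}\,]\cdot \lg \mass(u_{i-1}) - [\,i < k \text{ and } u_{i+1} \text{ wins a left link}\,] \cdot \lg \mass(u_i), \]
where the middle term is absent when $i = 1$.

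The heart of the argument --- and the step I expect to be the main obstacle --- is a structural claim: if $u_i$ wins a right link, then $u_{i+1}$ does not win a left link (together with the trivial observation that $u_1$, being the leftmost root at all times, never wins a left link). To prove the claim I would use the treapify property: leftmost locally maximum linking arranges the roots into a binary tree $T$ whose in-order (symmetric) traversal is the root list order, and in which the children of any node are exactly the roots that lost links to it during the \emph{delete-min}. By Lemma~\ref{lem:binary}, every node of $T$ has at most one left child and at most one right child, and ``$u_i$ wins a right link'' is equivalent to ``$u_i$ has a right child in $T$''. Since $u_i$ immediately precedes $u_{i+1}$ in the in-order of $T$, if $u_i$ has a right child then $u_{i+1}$ is the leftmost node of that right subtree, so it has no left child and wins no left link. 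This is exactly where the specific geometry of leftmost locally maximum linking enters; Lemma~\ref{lem:binary} by itself is not enough.

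Granting the claim, the remainder is bookkeeping with $p_i \ge \lg \mass(u_i)$ and $\lg \mass(u_{i-1}) \ge \lg \mass(u_i)$. If $u_i$ wins no link there is nothing to prove. If $u_i$ wins a single left link, then $i \ge 2$ and $u_i$ receives $\lg \mass(u_{i-1}) \ge \lg \mass(u_i)$ while shifting away at most $\lg \mass(u_i)$, so $p_i' \ge \lg \mass(u_i)$. If $u_i$ wins a single right link, the claim shows nothing is shifted away from $u_i$ (and it receives nothing), so $p_i' = p_i \ge \lg \mass(u_i)$. If $u_i$ wins two links, they are one left and one right by Lemma~\ref{lem:binary}, so $i \ge 2$, $u_i$ receives $\lg \mass(u_{i-1}) \ge \lg \mass(u_i)$, and by the claim nothing is shifted away, giving $p_i' = p_i + \lg \mass(u_{i-1}) \ge 2 \lg \mass(u_i)$. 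These same estimates show $p_i' \ge 0$ for every new root, which the global analysis will also require.
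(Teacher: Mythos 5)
Your proof is correct and, at the level of structure, matches the paper's: the same shifting scheme, the same case analysis on which links $u_i$ wins, and the same pivotal structural claim (a new root that wins a right link is never followed in link order by one that wins a left link, so it never shifts potential away). The one substantive difference is how that claim is justified. The paper argues temporally: if $v$ wins a right link and the next new root $w$ wins a left link, then whichever of the two links happens first forces the other node to be its loser (they are adjacent with nothing between them), and a loser can never win a subsequent link. You instead invoke the treapify property: in the binary tree $T$ produced by leftmost locally maximum linking, ``wins a left/right link'' is ``has a left/right child,'' and the in-order successor of a node with a right child is the leftmost node of that right subtree, which has no left child. Both arguments are sound; yours leans on the paper's (asserted, not fully proved) treapification fact plus Lemma~\ref{lem:binary} to know $T$ is binary, while the paper's is self-contained and also carries over verbatim to the smooth-heap version (Lemma~\ref{L-smooth-heap-node-potential}), where it must be applied separately to left and right roots that are no longer adjacent in the root list --- your in-order-successor argument would need modification there. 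The surrounding bookkeeping ($\mass(u_{i-1}) \geq \mass(u_i)$, the explicit observation that $u_1$ never wins a left link, and the accounting of gains and losses) is accurate.
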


\begin{proof}
Consider a new root $v$ that wins one or two links during the \emph{delete-min}.  If $v$ wins a left link, it acquires at least $\lg \mass(v)$ potential from the root before it in link order.  For $v$ to lose potential to the node $w$ after it in link order, $w$ must win a left link.  But then $v$ cannot win a right link, since if this link occurs before the left link that $w$ wins, $w$ must be the loser, and hence cannot later win a link, and if this link occurs after the left link that $w$ wins, $v$ must be the loser, and cannot later win a link.  We conclude that if $v$ wins both a left link and a right link, it acquires at least $\lg \mass(v)$ potential from the node before it in link order and retains its own potential of at least $\lg \mass(v)$; if $v$ wins only a right link, it retains its own potential of at least $\lg \mass(v)$; and if $v$ wins only a left link, it acquires at least $\lg \mass(v)$ potential from the node before it in link order, although it may lose its own potential.  Hence the lemma holds.
\end{proof}

In the case of smooth heaps, we call a new root a \emph{left} or \emph{right} root depending on whether it was a left or right child of $x$ before $x$ was deleted.  We must shift the potential of left and right roots separately, since the link order of the latter is the reverse of their order on the list of children of $x$ before $x$ was deleted. Specifically, for each left root $u$ except the last in link order, if the next left root $v$ in link order wins a left link, we shift $\lg \mass(u) \geq \lg \mass(v)$ from $u$ to $v$; for each right root $u$ except the last in link order, if the next right root $v$ in link order wins a right link, we shift $\lg \mass(u) \geq \lg \mass(v)$ from $u$ to $v$.  In addition, if the first right root $u$ in link order wins a (right) link with an old root, we give $u$ $\lg \size(u) \leq \lg \size(x)$ of the $2 + 2\lg \size(x)$ extra potential allocated to the \emph{delete-min}.  This accounts for the extra potential needed in the analysis for smooth heaps: since all the new roots precede all the old roots, a new root can only win a right link with an old root, not a left link.  This is a problem only when shifting potential to cover right links, which does not happen in slim heaps. See Figure~\ref{fig:shift-smooth}.

\begin{figure}[h!]
\begin{center}
\includegraphics[scale=0.45]{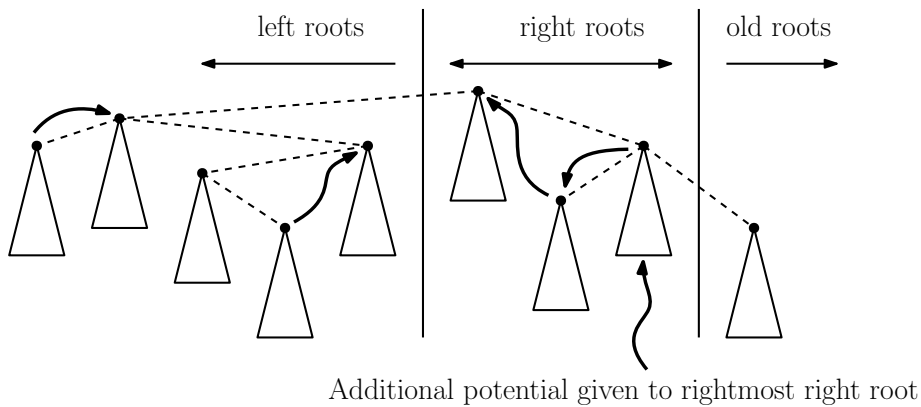}
\end{center}
\caption{An example of shifting potential in a smooth heap. Every curved line represents a transfer of potential equal to the logarithm of the mass of the new root at the start of the line. Left roots transfer potential in the same way as a slim heap. Right roots transfer potential in the symmetric way, and the rightmost right root receives additional potential to pay for a right link.}
\label{fig:shift-smooth}
\end{figure}

\begin{lemma}\label{L-smooth-heap-node-potential}
After the potentials are shifted in a smooth heap, each new root $u$ has potential at least $\lg \mass(u)$ if it wins one link during the \emph{delete-min}, at least $2\lg \mass(u)$ if it wins two. 
\end{lemma}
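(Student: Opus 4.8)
The plan is to mimic the proof of Lemma~\ref{L-slim-heap-node-potential}, treating the left roots and the right roots of $x$ separately. The split is forced by stable linking: in any node's child list the left children appear in link order while the right children appear in the \emph{reverse} of link order, so in the root list produced by deleting $x$ the left roots appear in link order but the right roots appear in reverse link order, and the two shifting chains (one among left roots, one among right roots) are oriented oppositely. Recall that transferring $x$'s freed potential has already brought every new root's potential to at least the logarithm of its own mass. For the left roots nothing new happens: they occur in link order in the post-deletion list, the shifting rule among them is the slim rule, and the case analysis of Lemma~\ref{L-slim-heap-node-potential} applies verbatim --- a left root $v$ that wins a left link acquires at least $\lg\mass(v)$ of potential from the preceding left root in link order; $v$ can shift potential only to the next left root $w$ in link order, and only when $w$ wins a left link; and, by the same adjacency-and-timing argument, once $w$ wins a left link $v$ can win no right link. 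Hence a left root that wins two links keeps at least $\lg\mass(v)$ and also gains at least $\lg\mass(v)$, while one that wins a single link has at least $\lg\mass(v)$. The endpoint of this chain, the leftmost left child $\ell$ of $x$ (initial potential zero), stays the leftmost root throughout, so it never acquires a left neighbour, wins at most one (right) link, and is covered by the $\lg\size(x)>\lg\mass(\ell)$ that $x$'s freed potential gives it (a short adjacency argument shows $\ell$ does not leak it).

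For the right roots, run the mirror argument with ``left'' and ``right'' interchanged throughout. In the post-deletion list the right roots appear in reverse link order, so the right list-neighbour of a right root plays the role of its predecessor in link order and the left list-neighbour plays the role of its successor; the shifting rule among right roots is the mirror of the slim rule (a right root shifts to the next right root in link order --- its left list-neighbour --- when that neighbour wins a right link); and the timing argument becomes: once a right root's left list-neighbour wins a right link, the right root can win no left link. Carrying the three-case analysis through in mirror form gives the claimed bounds for every right root except the chain endpoint, the rightmost right child $r$ of $x$.

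The root $r$ is the case that needs real work, and it is where the $2+2\lg\size(x)$ allocated to the \emph{delete-min} is spent. Unlike $\ell$, the root $r$ is the rightmost new root and so is not shielded by being leftmost: it can win a left link (necessarily against another new root), and since every new root precedes every old root it can win a right link only against an old root, for which there is no preceding right root in link order to shift from. I would account for $r$ as follows. The $\lg\size(x)$ given to $r$ from $x$'s freed potential, which exceeds $\lg\mass(r)$, pays for one link. If $r$ also wins a right link with an old root, the earmarked $\lg\size(r)\le\lg\size(x)$ from the \emph{delete-min} budget pays for that link --- one checks that the increase in potential among $r$'s children caused by this link stays below $\lg\size(r)$ even if $r$ has meanwhile won a left link. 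And one verifies via Lemma~\ref{lem:binary} that $r$ never leaks the potential it was given: if $r$ wins a left link then, by the proof of Lemma~\ref{lem:binary}, $r$ never afterwards loses a link, so $r$ stays adjacent to the next right root in link order until that root is removed by losing a link, whence that root never wins a right link and $r$ shifts it nothing.

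The main obstacle is exactly this root $r$: it is the unique new root that is simultaneously a shifting-chain endpoint (no predecessor from which to draw shifted potential) and unshielded (not the permanently-leftmost root, so it can genuinely win two links). Balancing the accounting there --- splitting the cost of $r$'s links between the transferred $\lg\size(x)$ and the earmarked $\lg\size(r)$, and invoking Lemma~\ref{lem:binary} to rule out $r$ being itself a source in the right-root shifting chain once it has won a left link --- is the single place the argument departs from a transcription of Lemma~\ref{L-slim-heap-node-potential}. Everything else is symmetric bookkeeping, resting on the standing facts that $\mass(c)<\size(x)$ for every child $c$ of $x$ and $\mass(u)\ge\size(u)$ for every node $u$.
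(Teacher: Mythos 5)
Your proposal follows the paper's proof: apply the slim-heap argument (Lemma~\ref{L-slim-heap-node-potential}) separately to the left roots and to the right roots, the only genuine exception being the first right root in link order (the rightmost right child of $x$), which may win a right link against an old root with no predecessor in the shifting chain, and which is paid for out of the extra $\lg \size(x)$ allocated to the \emph{delete-min} --- exactly as in the paper. One small correction: your appeal to Lemma~\ref{lem:binary} to claim that $r$ ``never afterwards loses a link'' once it wins a left link is false ($r$ can later lose a right link); the fact you actually need --- that if $r$ wins a left link then the next right root in link order cannot win a right link, so $r$ shifts nothing to it --- is just one more instance of the adjacency-and-timing argument you already use for every other root, which is also how the paper handles it.
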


\begin{proof} The proof is that of Lemma~\ref{L-slim-heap-node-potential} applied separately to left roots and right roots.  Consider a left root $v$ that wins one or two links during the \emph{delete-min}.  If $v$ wins a left link, it acquires at least $\lg \mass(v)$ potential from the left root preceding it in link order: there must be such a root since if $v$ wins a left link it must be with a left root.  For $v$ to lose potential, it must be to the left root $w$ after it in link order, and $w$ must win a left link.  But then $v$ cannot win a right link, since if this link occurs before the left link that $w$ wins, $w$ must be the loser, and hence cannot later win a link, and if this link occurs after the left link that $w$ wins, $v$ must be the loser, and cannot later win a link.  We conclude that if $v$ wins both a left link and a right link, it acquires at least $\lg \mass(v)$ potential from the left root preceding it in link order and retains its own potential of at least $\lg \mass(v)$; if $v$ wins only a right link, it retains its own potential of at least $\lg \mass(v)$; and if $v$ wins only a left link, it acquires at least $\lg \mass(v)$ potential from the left root preceding it in link order, although it may lose its own potential. The symmetric argument applies to right roots, except that the first right root in link order may win a right link (with an old root), and not receive any potential for it.  The additional $\lg \size(x)$ potential allocated to a smooth heap \emph{delete-min} covers this.  Hence the lemma holds.
\end{proof}

By Lemmas \ref{L-slim-heap-node-potential} and \ref{L-smooth-heap-node-potential}, every new root has enough potential to cover the increase in the potential of its children if it wins one or two links.  For each new root $u$ that wins two links during the linking, we allocate $\lg \size(u)$ potential from $u$ to the child whose potential increases when the second link occurs. Each new root $u$ that wins at least one link retains at least $\lg \mass(u)$ potential.  We shall show that if we add $2+\lg \size(x)$ additional potential, there are at least two units of extra potential per new root $u$ that wins at least one link, in addition to the increase in potential of a child of $u$ when $u$ first wins a link.  The latter is less than $\lg \size(u)$.  We consider new roots that win at least one link from first to last in link order.  We maintain the invariant that the next root $u$ that wins a link has at least $2\lg \mass(u)$ potential.  To establish the invariant for the first such root $u$, we give it $\lg \mass(u) < \lg \size(x)$ of the $2+\lg \size(x)$ additional potential.

Let $u$ be the current root under consideration and let $v$ be the next new root after $u$ that wins a link.  Node $u$ currently has potential at least $2\lg \mass(u)$.  Since $\mass(u) \geq \size(u) +\mass(v)$, the inequality $2\lg(\size(u) +\mass(v)) \geq \lg \size(u) + \lg \mass(v) + 2$ allows us to move $\lg \size(u)$ potential from $u$ to its child whose potential increases when $u$ first wins a link, and to move $\lg \mass(u)$ potential from $u$ to $v$, establishing the invariant for $v$, with at least two units of potential left over to pay for the link or links that $u$ wins during the \emph{delete-min}.  This argument covers all but the last new root $w$ in link order that wins at least one link, but $w$ ends up with potential at least $2\lg \mass(w)$, more than enough to cover the increase in potential of one of its children when it first wins a link. We use the two units of remaining added potential to pay for the actual time of the one or two links won by $w$.

The single root remaining after all the linking has a potential of  $2 + 2\lg n$.  We conclude that the amortized time of a \emph{delete-min} is at most $5 + 3\lg n$ in a slim heap, at most $5+ 4\lg n$ in a smooth heap: the links are covered by potential decreases, we added $2+\lg n$ or $2 +2\lg n$ extra potential, the remaining root needs potential $2 +2\lg n$, and the \emph{delete-min} takes time $1$ in addition to the links.  This finishes the proof that \emph{delete-min} takes $O(\lg n)$ amortized time.

\section{Implementation of \emph{decrease-key} and \emph{delete}}
\label{sec:decrease-key}

In this section we implement \emph{decrease-key} and \emph{delete}.  We give two implementations of \emph{decrease-key}, a simple one for which it is easy to prove an $O(\lg n)$ amortized bound, and a more complicated one based on Elmasry's for pairing heaps, which has an $O(\lg\lg{n})$ bound.

Our implementations of \emph{decrease-key} and \emph{delete} require adding a third pointer to each node.  If $x$ is a child that is not leftmost on its list of siblings, its third pointer indicates its previous sibling; if it is leftmost, its third pointer indicates its parent.  If $x$ is a root, its third pointer indicates the left adjacent root on the root list, or the rightmost root on the root list if $x$ is the min-root.  This makes the lists of children and the list of roots doubly linked, and supports deletion of a node (and its subtree) from such a list in $O(1)$ time.

\subsection{Key decrease}~\label{decrease-key}
A simple implementation of \emph{decrease-key}$(x, k, h)$ is the one used in pairing heaps: Set the key of $x$ to $k$.  If $x$ is not a root, cut $x$ (and its subtree) from the list of children containing it, and add $x$ to the root list.  Update the min-root.  This takes $O(1)$ actual time.  Cutting $x$ from its parent and making it a root increases the potential only of $x$, by at most $2 + 2\lg \size(x)$.  Thus the amortized time of \emph{decrease-key} is $O(\lg n)$.
A much more complicated argument extending Pettie's analysis of the same implementation of \emph{decrease-key} in pairing heaps gives an amortized bound of $O(\lg\lg n)$~\cite{STdecrease}.

We obtain more directly an amortized bound of $O(\lg\lg n)$ for \emph{decrease-key} if we use Elmasry's implementation for pairing heaps~\cite{Elmasry17}.  We maintain a \emph{buffer} that contains roots whose keys have been decreased.  The buffer of a new heap is initially empty.  To do \emph{decrease-key}($x, k, h$), we set the key of $x$ to $k$.  If $x$ is not a root, we first cut $x$ (and its subtree) from its sibling list $L$.  Second, if $x$ has a child, we cut its leftmost child $y$ (and its subtree) from $x$, and replace $x$ by $y$ in $L$.  Third, we add $x$ to the buffer.  If the buffer contains at least $\lg n$ roots, we empty it.  Whether or not $x$ was a root, we finish the \emph{decrease-key} by updating the min-root. 

When doing a \emph{delete-min}, we begin by emptying the buffer.  When doing a \emph{meld}, we empty the buffer of the smaller heap; the buffer of the larger heap becomes the buffer of the melded heap.

To empty the buffer, we sort the roots in the buffer in non-increasing order by key, and link them using leftmost locally maximum linking, which makes each root the leftmost child of the one of next-smaller key, whether the heap is a slim heap or a smooth heap. Each link is a left link.  We add to the root list the  root remaining after the roots in the buffer are linked.

To support \emph{find-min} in $O(1)$-time, we maintain the min-root of the buffer as well as the min-root of the roots not in the buffer.  This adds $O(1)$ time to each operation.  We also need to store with each heap the number of nodes it contains.  This is easy to maintain in $O(1)$ time per operation.

This way of doing \emph{decrease-key} reduces the amortized time to $O(\lg\lg{n})$.  To prove this we use the potential function of Section~\ref{sec-dfn-potential}, with three changes:

(i) If $x$ is a root in the buffer that lost a child $y$ when its key was decreased, we give its children the potential they had before $y$ was cut from $x$.  Since $y$ is the leftmost child of $x$ before it is cut, if the heap is a slim heap $y$ has zero potential. If the heap is a smooth heap, $y$ has zero potential unless $y$ and all the children of $x$ are right children, not left children.  In the latter case, $x$ has at most one child of zero potential both before and after $y$ is cut.  We conclude that after $y$ is cut, $x$ has at most one child of zero potential, its most recently acquired child if the heap is a slim heap; its most recently acquired right child if the heap is a smooth heap.  This  allows $x$ to win a new left link without increasing the potential of any of its children, including its new one.

(ii) To pay for emptying the buffer when it is full, we give each root $u$ in the buffer a potential of $4+\lg\lg{n}$, rather than the $2 + \lg \size(u)$ potential it would have if it were in the root list.

(iii) To pay for emptying the buffer of a heap when it is melded with a larger heap, we give each heap of $n$ nodes an extra potential of $4\lg n$.

The actual time to decrease the key of a node $x$ is $O(1)$.  After $x$ is cut from its sibling list and its leftmost child is cut from it, its children have exactly the potential they need when $x$ is in the buffer.  Replacing $x$ by the first child of $x$ does not increase the potential of any node in the tree previously containing $x$.  Thus the potential increase caused by a \emph{decrease-key} is at most the potential of the new root in the buffer, which is $4+\lg\lg{n}$, making the amortized time of the \emph{decrease-key} $O(\lg\lg{n})$ unless the buffer becomes full and is emptied.

The actual time to empty the buffer is $O(\lg\lg{n})$ per root in the buffer, with the sorting time dominant.  The potential of $\lg\lg{n}$ per root in the buffer covers this time.  Because of the sorting, each root in the buffer acquires at most one new left child, which does not increase the potential of any node in any of the trees rooted at these nodes.  One root, say $u$, remains after the roots in the buffer are linked.  This node needs potential $2+2\size(u) \leq 4\lg n$ when it is added to the root list. Since the buffer is full when it is emptied in a \emph{decrease-key}, $u$ acquires the needed potential from the extra $4$ units of potential per root in the buffer before these roots are linked.  Hence \emph{decrease-key} takes $O(\lg\lg{n})$ time whether or not the buffer is emptied.

If the buffer is emptied during a \emph{delete-min}, the root formed by linking the roots in the buffer needs potential at most $2+ 2\lg n$, increasing the amortized time of the \emph{delete-min} by this amount, but the amortized time remains $O(\lg n)$.

It remains to consider \emph{insert} and \emph{meld}.  Insertion is just a special case of meld, in which one of the heaps contains only one item and an empty buffer.  Consider a meld of two heaps, with the larger heap, say $h$, containing $n$ nodes.  The heap resulting from the meld has size at most $2n$. The heap potential of the smaller heap covers the potential of the root formed by linking the roots in its buffer when it is emptied.  The heap potential of the new heap is at most $4\lg(2n) = 4\lg n +4$, which is an increase of at most $4$ of the heap potential of $h$.  Let $k \leq \lg n$ be the number of roots in the buffer of $h$.  The meld increases the sum of the potentials of these roots by at most 
\[k(\lg\lg{(2n)}-\lg\lg{n}) \leq (\lg n)\lg\left(\frac{\lg n +1}{\lg n}\right) = (\lg n)\lg(1 +1/\lg n) \leq (\lg n)(\lg e)/\lg n = \lg e\]
by the inequality $\lg(1 + 1/\lg n) \leq \lg e/\lg n$ if $n \geq 2$.  We conclude that the amortized times of \emph{insert} and \emph{meld} remain $O(1)$.

\subsection{Arbitrary deletion}

To delete a node $x$ in a heap $h$, if $x$ is the min-root of $h$, we merely do \emph{delete-min}($h$).  If $x$ is not the min-root of $h$, we offer three ways of doing the deletion.  One is to decrease the key of $x$ to minus infinity and then do \emph{delete-min}($h$).  Using either of our implementations of \emph{decrease-key}, this takes $O(\lg n)$ amortized time, with the \emph{delete-min} time dominating.  An alternative that does not use \emph{decrease-key} is to repeatedly link adjacent children of $x$ using leftmost locally maximum linking until $x$ has only one child, and then replace $x$ by its only child in the root list, or in its list of siblings if $x$ is not a root.  This implementation of \emph{delete} also takes $O(\lg n)$ time, by an analysis like that of \emph{delete-min} in Section~\ref{bounds}.  A third alternative is to delay linking the children of the deleted node, by merely replacing the deleted node in its list of siblings (or in the root list if it is a root) by its list of children.  This also takes $O(\lg n)$ amortized time, by an extension of the analysis in Section~\ref{bounds}: we must add $O(\lg n)$ additional potential to each group of siblings whose parent is deleted, to help pay for the links they eventually win. 

\section{Permanent nodes do not count}
\label{sec:remain}

In this section we prove Theorem~\ref{thm:remain}, which states that nodes that are never deleted do not slow down operations:

\restatethma*
\begin{proof}
Call a node \textit{temporary} if it will eventually be deleted, and \textit{permanent} otherwise.
The total time of a sequence of operations is O(1) per operation plus the number of links.  All the links except at most one per \emph{decrease-key} operation are done during \emph{delete-min} operations and when a buffer of nodes whose keys have decreased is emptied.  Thus it suffices to count links.  We separately count links won by permanent nodes and those won by temporary nodes.  A link won by a permanent node either remains permanent, or is cut by a subsequent \emph{decrease-key operation}: such links cannot be cut by deletions.  It follows that there is at most one such link per \emph{insert} and at most two per \emph{decrease-key}.

It remains to count links won by temporary nodes.  To do this we redefine the size of a node to be the number of temporary nodes in its subtree, and we redefine mass and potential accordingly, except that we give a root with size $0$ potential $2$ and a child with mass $0$ potential $0$.
It is easy to check that with the new potential function \emph{make-heap}, \emph{find-min}, \emph{insert}, and \emph{meld} still take constant amortized time, and \emph{decrease-key} still takes $O(\lg\lg{n})$ amortized time.%

We make a few small changes to adapt the analysis of \emph{delete-min} to the new potential function.  We apply the analysis of links won by old roots only to those that are temporary.  We add and shift potentials only among temporary new roots.  The additional potential needed for the analysis to work is $O(\lg t)$ rather than $O(\lg n)$.  Thus we obtain an $O(\lg t)$ amortized time bound for \emph{delete-min}.  We also obtain an $O(\lg t)$ amortized time bound for either implementation of arbitrary deletion that does not use \emph{decrease-key}; for the implementation that uses \emph{decrease-key}, the bound is $O(\lg t + \lg\lg{n})$. 
\end{proof}

\section{Experimental evaluation}\label{sec:exp}

We implemented the pairing heap, the smooth heap, and the slim heap in %
Python 3, 
with 
the same generic heap interface.  We compared their performance in two scenarios: sorting and Dijkstra's single-source shortest paths algorithm. %
In both tasks %
we counted two types of logical operations: \emph{comparisons} between keys and \emph{links} between nodes. Note that in pairing heaps comparisons and links occur together, making the two counts equal. By contrast, in smooth heaps and slim heaps the number of comparisons is larger than the number of links by a factor of at most two (see~\cite{KS19})\footnote{In the worst case $2k - \Theta(\lg{k})$ comparisons are necessary to combine $k$ roots during \emph{delete-min}, since the outcome corresponds to one of $\Theta(4^k/k^{3/2})$ possible binary trees.}. In practice, links are typically costlier than comparisons, requiring several pointer changes. Thus the number of links is expected to be indicative of the actual running time. %
We summarize the experimental setup and findings (Figures~\ref{fig:exp1}, \ref{fig:exp2}, \ref{fig:exp3}) and refer to \cite{Hartmann_thesis} for further experiments.\footnote{The scripts used can be found at \url{https://git.imp.fu-berlin.de/hlm/smooth-heap-pub}.} 

\subparagraph*{Sorting.} 
Given a list of roots containing the keys $\pi_1, \dots, \pi_n$ of an input permutation $\pi$, we execute $n$ \emph{delete-min} operations, sorting $\pi$. 
It has been hypothesized that the smooth heap performs particularly well on structured inputs. To test this experimentally, we considered, besides uniformly random inputs, classes of \textit{semi-random} permutations. We tested four families of input permutations: %

\emph{(a) Uniformly random}: We generated permutations of sizes $n = 2,2^2,\dots,2^{17}$ using the pseudo-random \emph{random.shuffle} function of Python, with $5$ independent runs for each size. 

\emph{(b) Separable:} Starting with the sequence $1, \dots, n$, for $n = 2,2^2,\dots,2^{17}$, we did the  following shuffling: reverse the sequence with probability ${1}/{2}$, then recursively process the first half and second half of the sequence in the same way, doing $20$ independent runs for each size. The permutations obtained are \emph{separable} (see e.g.~\cite{BBL}). %
\emph{(c) Localized:} We generated a sequence of length $n = 10000$ where each element is drawn from a Gaussian normal distribution centered at its index, with standard deviation proportional to a parameter $\varepsilon$, doing $10$ independent runs for each value $\varepsilon = 0, 0.01, \dots, 0.3$. %

\emph{(d) Sorted blocks:} Starting with a uniform random permutation of size $n=10000$, we sorted contiguous blocks of elements, where the block sizes are chosen uniformly at random from the range $[1, \dots B]$, with $20$ independent runs for each value $B = 100,200,\dots,2000$.

\subparagraph*{Shortest paths.}

For all three heaps we used a similar multi-tree implementation of \emph{insert} and \emph{decrease-key} that simply adds the respective node to the end of the root list.  (This is \emph{not} the standard pairing heap implementation, which maintains a single tree, and the implementation of \emph{decrease-key} in smooth heaps and pairing heaps is the simple one.) Since these structures differ only in the restructuring during delete-min, we counted the number of links and comparisons only for delete-mins, taking averages over $10$ independent runs. We considered two families of undirected input graphs, with edge weights in both cases chosen uniformly at random from $\{1,\dots,10000\}$:

\emph{(e) Random graphs: } with $n = 500$ vertices, generated according to the Erd\H{o}s-R\'enyi model, with edges present with probability $p$ (for values $p = 0,0.05,\dots,1$), and 

\emph{(f) Random $10$-regular graphs:} with $n=500,1000,\dots,10000$ vertices. 
\subparagraph*{Findings.}

In (a) the number of comparisons (and links) done by the pairing heap is a small factor ($\approx 1.2$) above the information-theoretical lower bound of $\lg(n!)$. On our largest test, the smooth (slim) heap performs about $22\%$ ($44\%$) more comparisons. In turn, the pairing heap performs about $47\%$ ($11\%$) more links than the smooth (slim) heap.

In (b) the number of operations appears to be $O(n)$ for all data structures. This is consistent with the fact that there are $2^{O(n)}$ such permutations of size $n$. In fact, for the smooth heap, it is known~\cite{KS19} that sorting separable permutations takes time $O(n)$, as implied by more general results for Greedy BST. It can be observed that the pairing heap performs on average both more comparisons and more links than the smooth and slim heaps. The differences in the number of comparisons are small, but in the number of links the overhead of pairing heaps is about $58\%$ ($65\%$) in our largest test. One can further notice that there is less variation in the number of operations in smooth/slim heaps, whereas the pairing heap is sensitive to the exact choice of input permutation. 

In (c) the situation is similar to the uniformly random case, with the ordering slim, {smooth}, pairing in the number of comparisons, and pairing, slim, smooth in the number of links. Although the overall costs are smaller to (a), the smooth/slim heaps do not appear to have a particular advantage in adapting to this type of structure.   

In (d) the smooth/slim heaps have a clearer advantage. When the input permutation consists of just about $10$ sorted blocks, the pairing heap performs about $7$ comparisons and links per key, whereas both  smooth and slim heaps perform about $6$ comparisons and $4$ links per key. The advantage diminishes as the sizes of the sorted blocks decrease.  

In the densest case of (e), the smooth (slim) heaps perform about $40\%$ ($45\%$) more comparisons than the pairing heap, while the pairing heap performs about $22\%$ ($14\%$) more links than the smooth (slim) heaps. In the largest case of (f), the smooth (slim) heaps perform about $36\%$ ($48\%$) more comparisons, while the pairing heap performs about $23\%$ ($6\%$) more links.

\begin{figure}[ht!]
\begin{center}
\begin{subfigure}[t]{.49\textwidth}
  \centering{\includegraphics[height=2.3in]{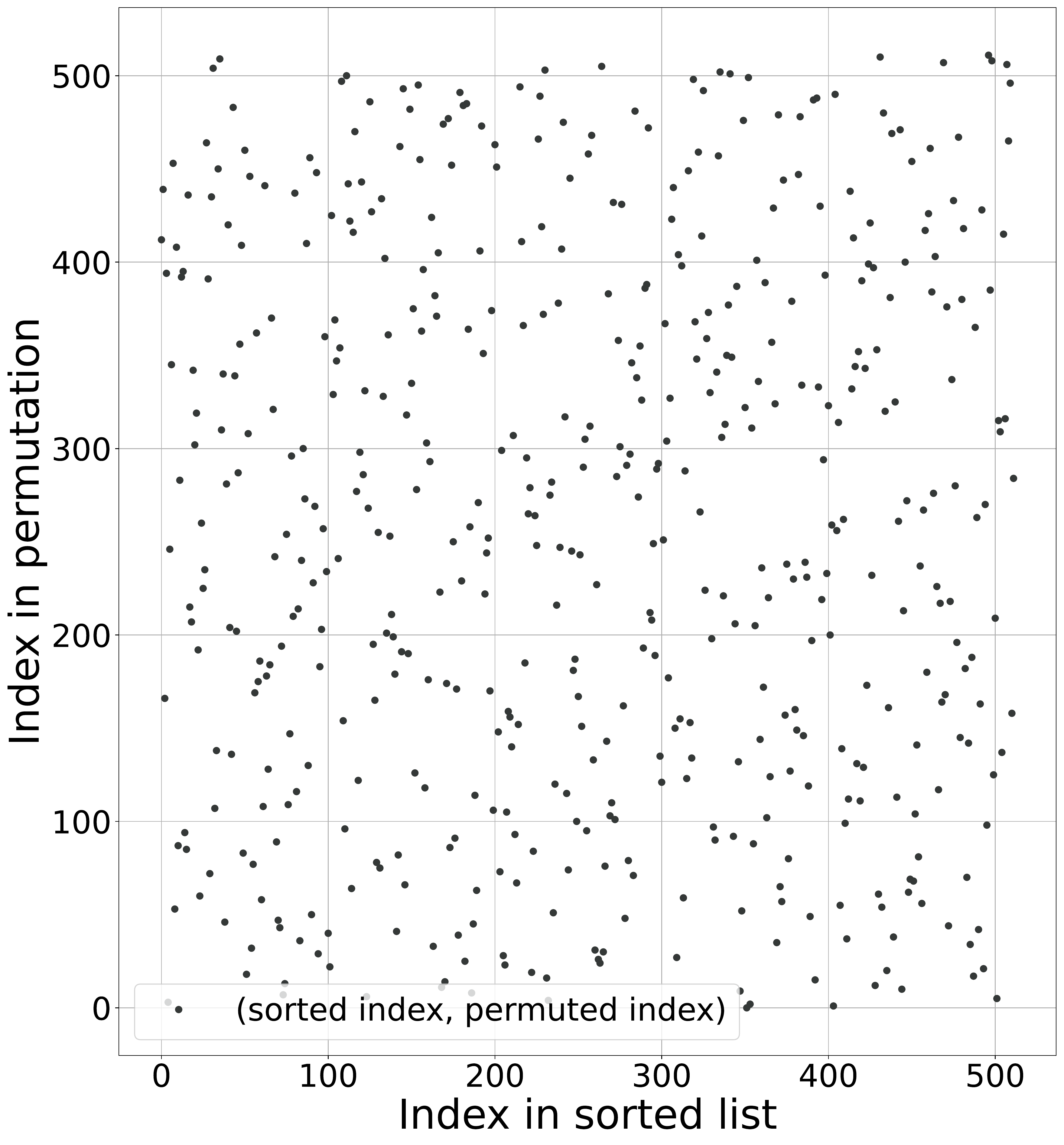}}
  \caption{Uniform random permutation \label{subfig:uniformly-random-input}}
\end{subfigure}
\begin{subfigure}[t]{.49\textwidth}
  \centering{\includegraphics[height=2.3in]{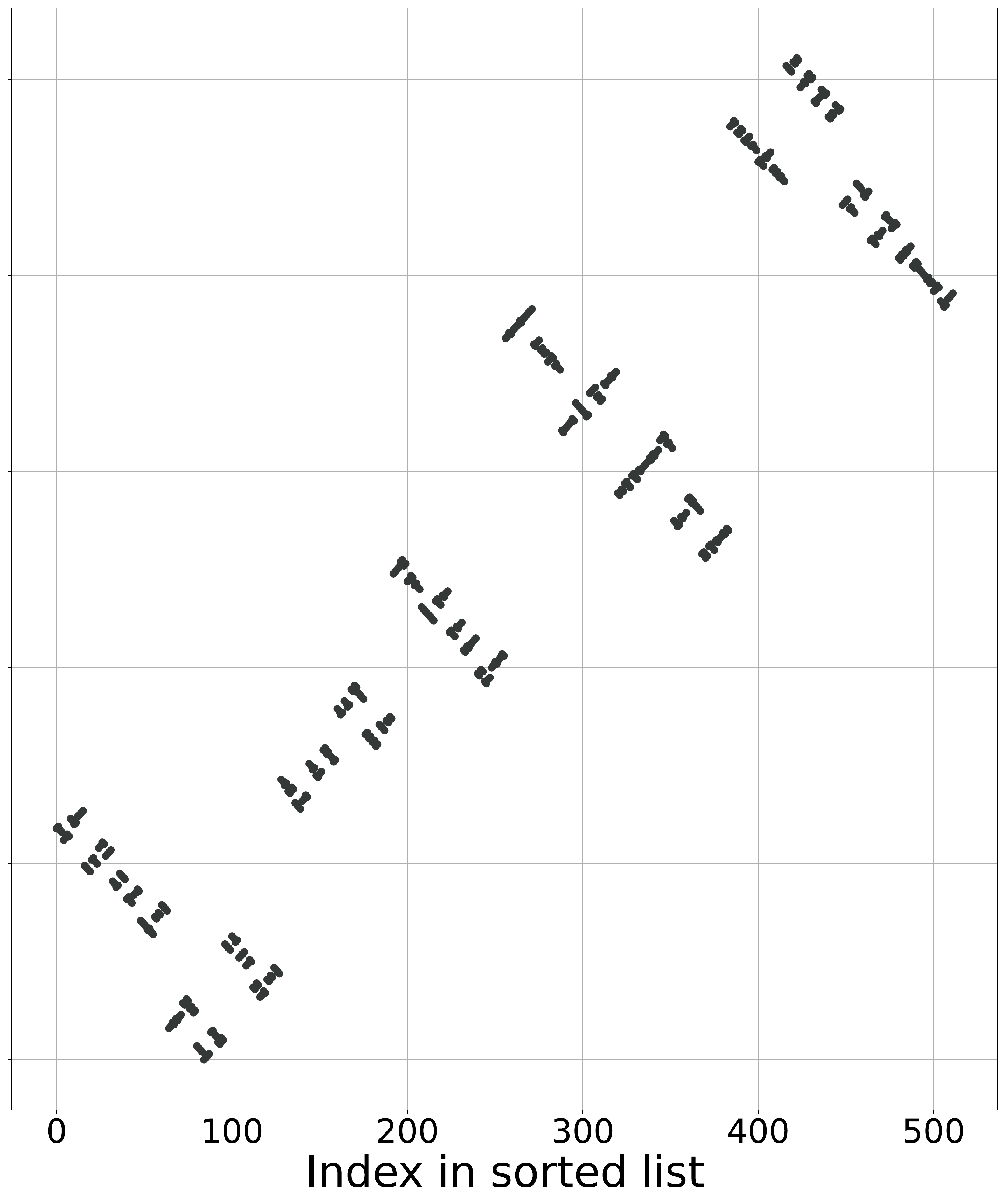}}
  \caption{Random separable permutation \label{subfig:separable-permutation-input}}
\end{subfigure}

\vspace{0.28in}
\begin{subfigure}[t]{.49\textwidth}
\centering{
  \hspace{0.27in}\includegraphics[height=2.3in]{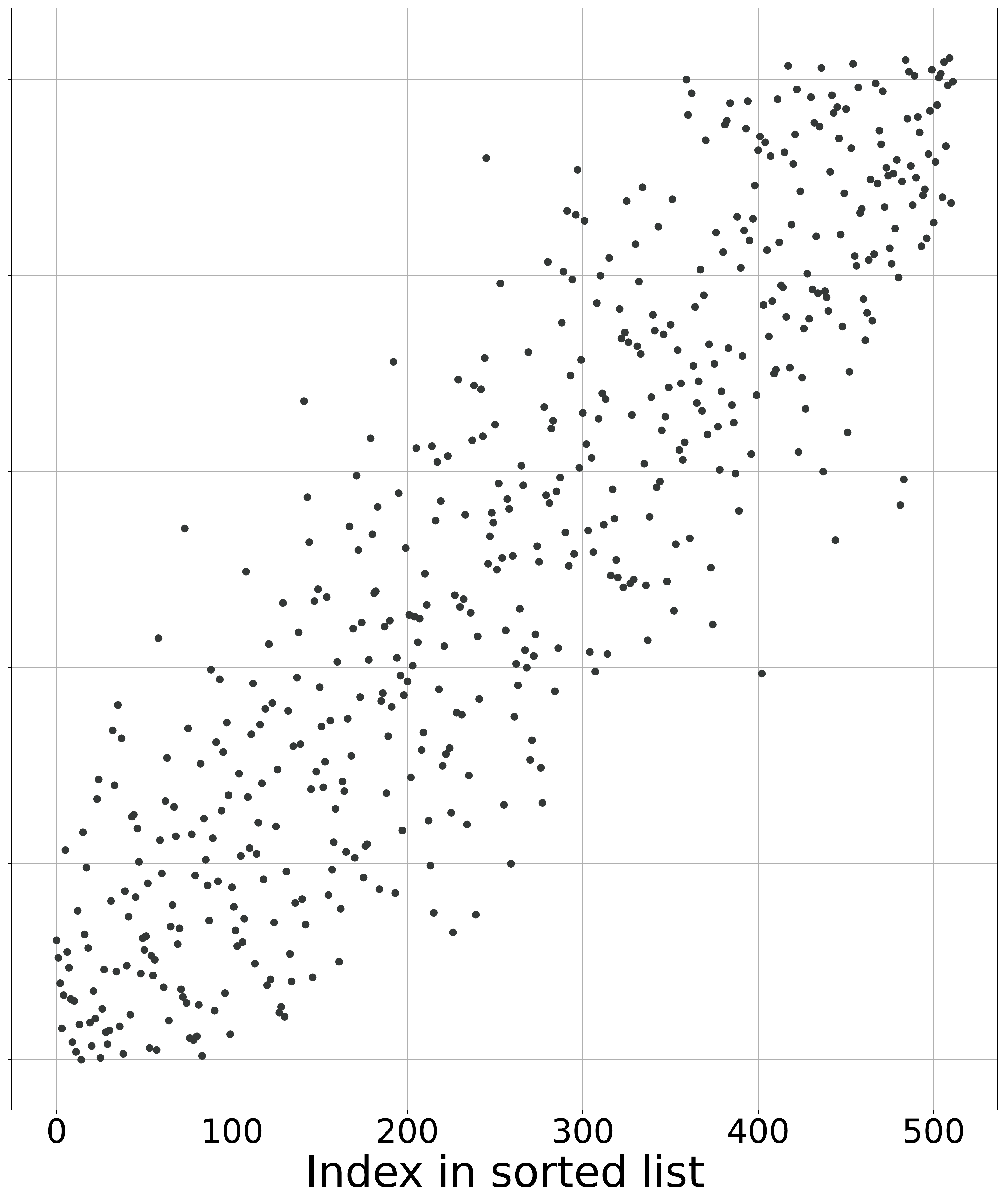}
  }
\caption{Random permutation with locality parameter $\varepsilon = 0.15$ \label{subfig:localised-permutation-input}}
\end{subfigure}
\begin{subfigure}[t]{.49\textwidth}
  \centering{\includegraphics[height=2.3in]{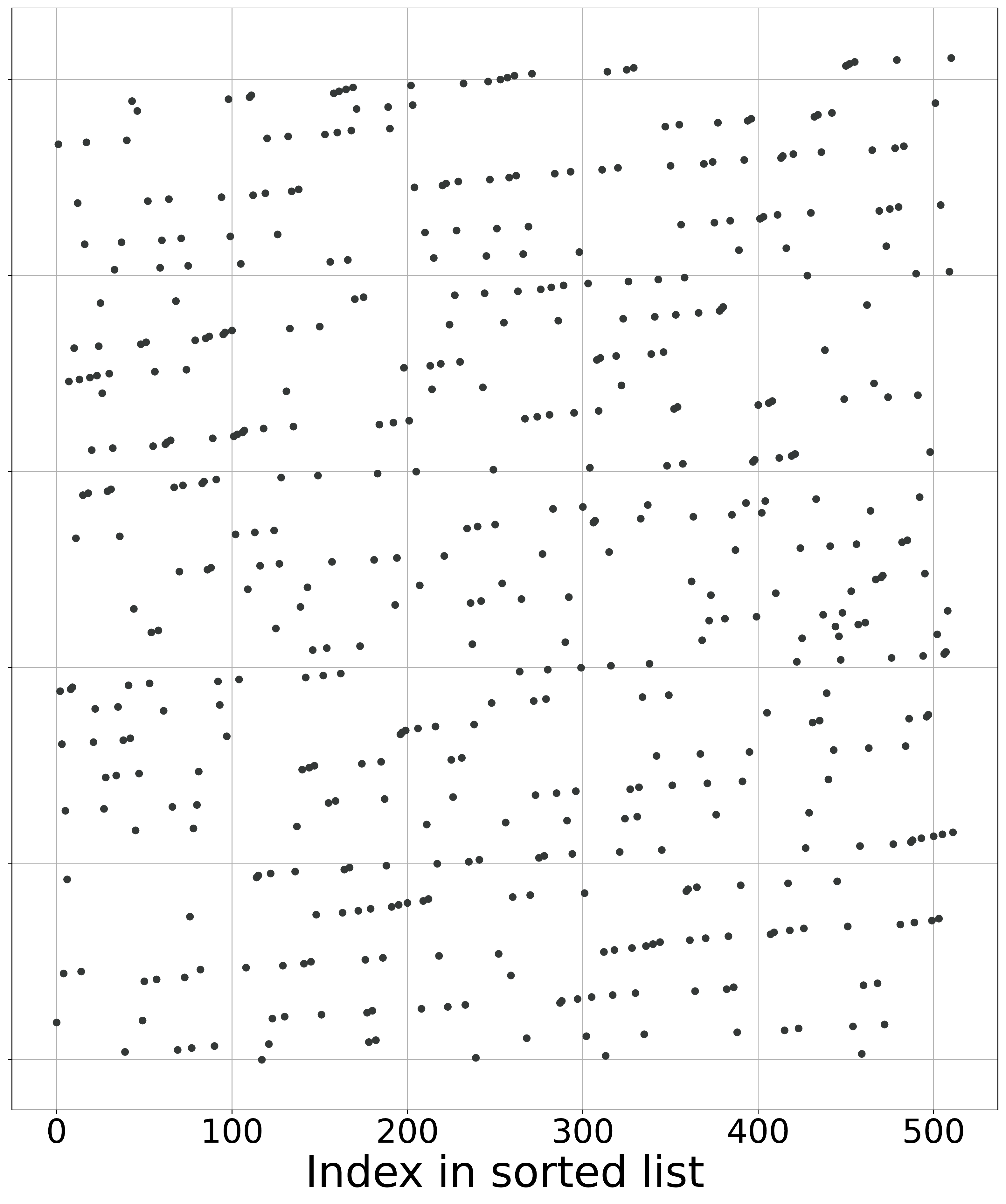}}
  \caption{Random permutation with average sorted blocks of length $15$\label{subfig:subseq-permutation-input}}
\end{subfigure}
\end{center}
\caption{Sample permutations of size $512$ generated for the different families of permutations.\label{fig:exp1}}
\end{figure}

\begin{figure}[ht!]
\centering{
\begin{subfigure}[t]{.49\textwidth}
\centering{\includegraphics[height=2.3in]{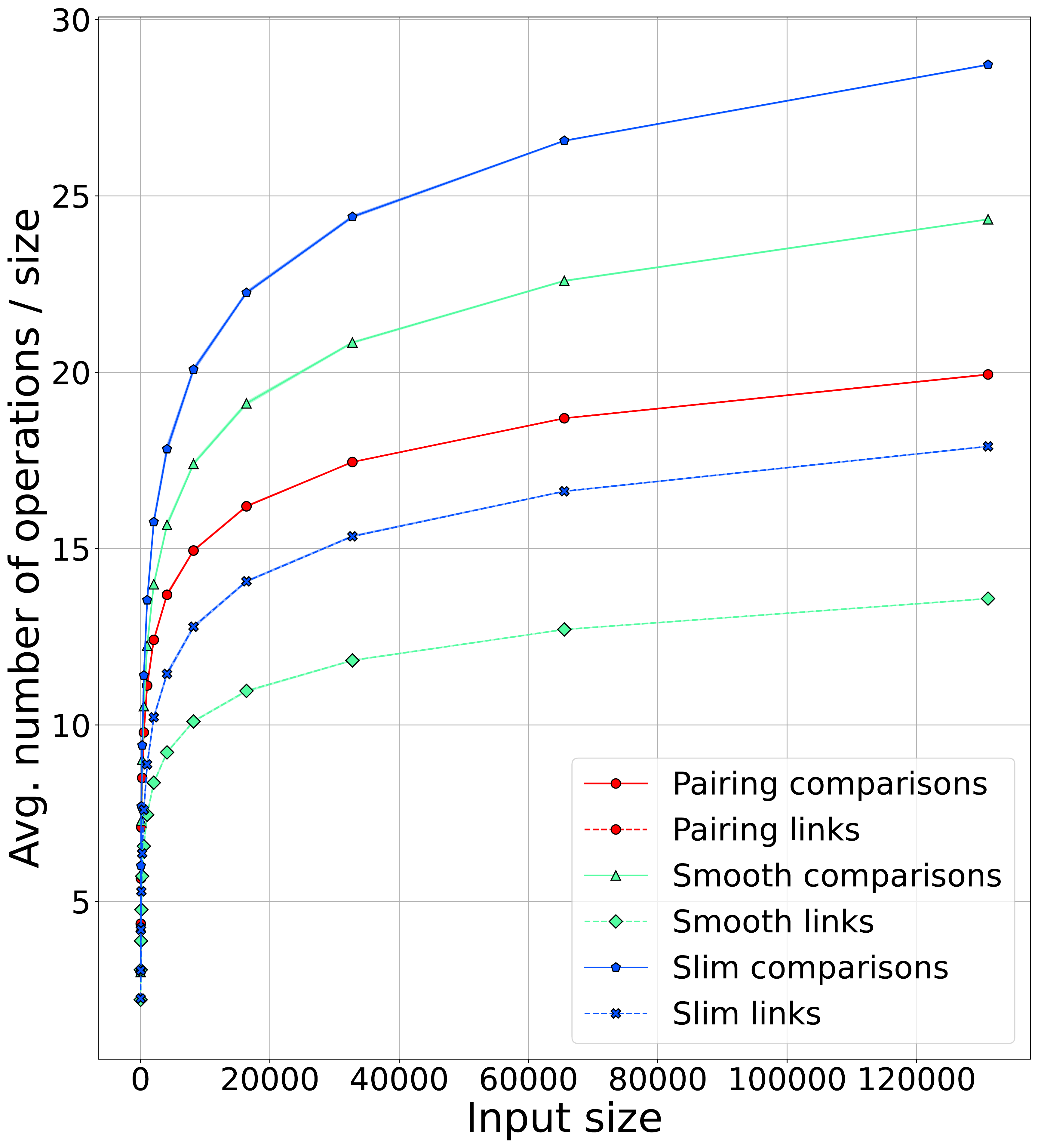}}
\caption{Uniform random permutations \label{subfig:uniformly-random-sorting}}
\end{subfigure}
\begin{subfigure}[t]{.49\textwidth}
  \centering{\includegraphics[height=2.3in]{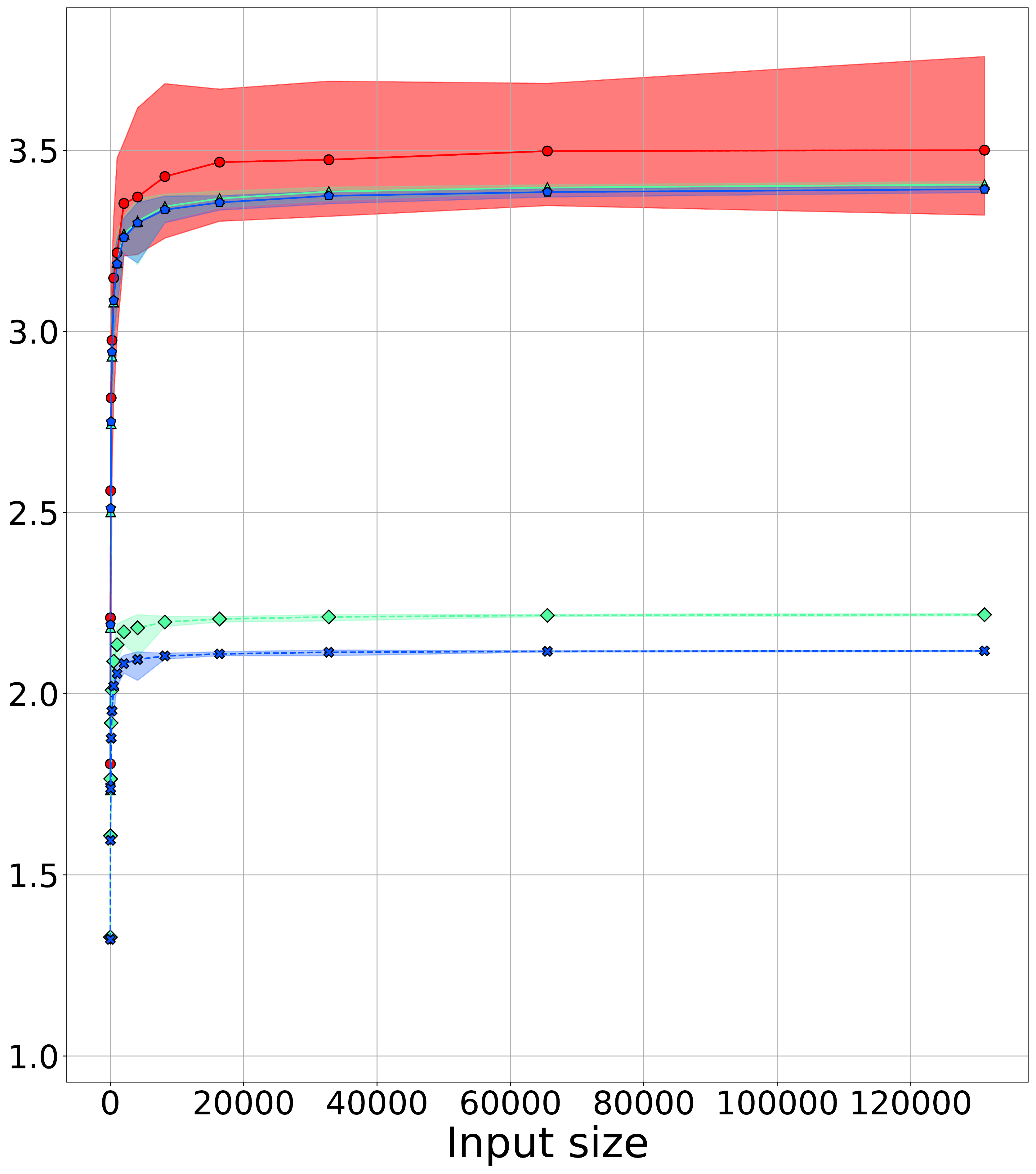}}
  \caption{Random separable permutations \label{subfig:separable-permutation-sorting}}
\end{subfigure}
\vspace{0.28in}

\begin{subfigure}[b]{.49\textwidth}
  \centering{\includegraphics[height=2.3in]{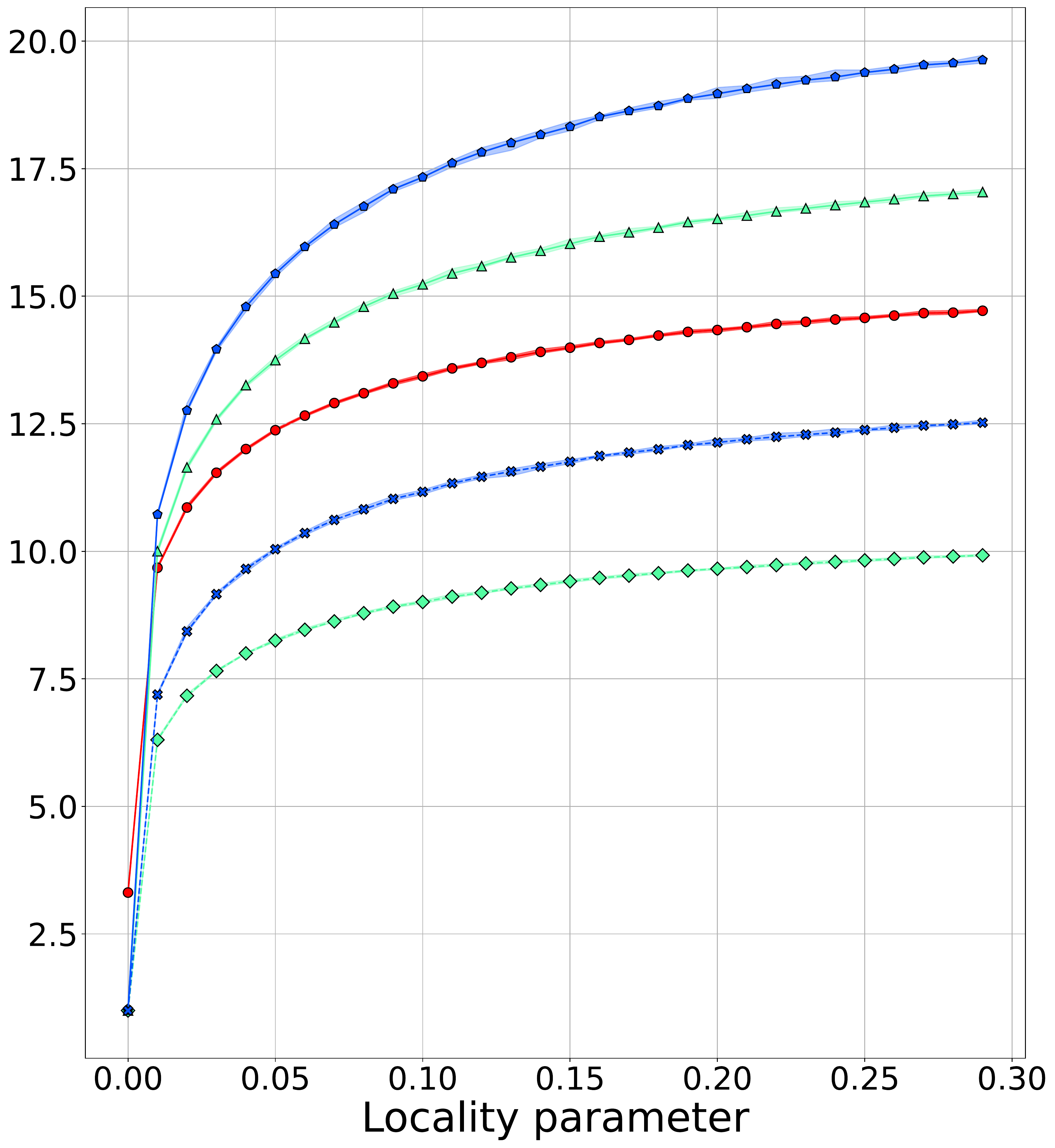}}
  \caption{Random permutations of size $10^4$ with locality parameter $\varepsilon$ \label{subfig:localised-permutation-sorting}}
\end{subfigure}
\begin{subfigure}[b]{.49\textwidth}
  \centering{\includegraphics[height=2.3in]{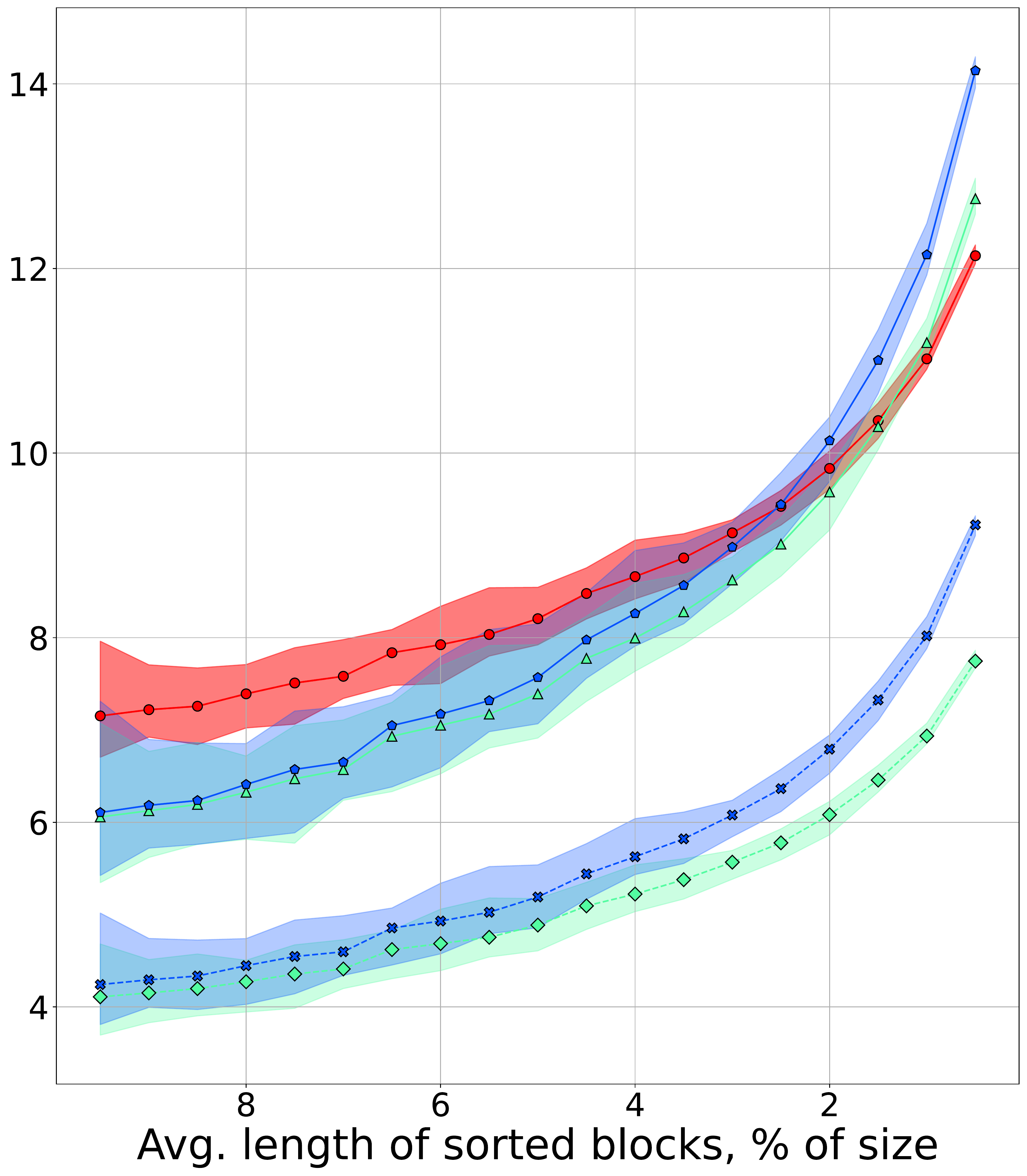}}
\caption{Random permutations of size $10^4$ with sorted blocks}
\end{subfigure}
}
\caption{Average number of links (dashed lines) and average number of comparisons (continuous lines) for sorting permutations using pairing heap (red), smooth heap (green), and slim heap (blue); note that for pairing heaps the two counts are equal. Shaded areas indicate ranges between minimum and maximum costs.\label{fig:exp2}
}
\end{figure}

\begin{figure}[ht!]
\centering{
\begin{subfigure}[b]{.49\textwidth}
  \centering{\includegraphics[height=2.3in]{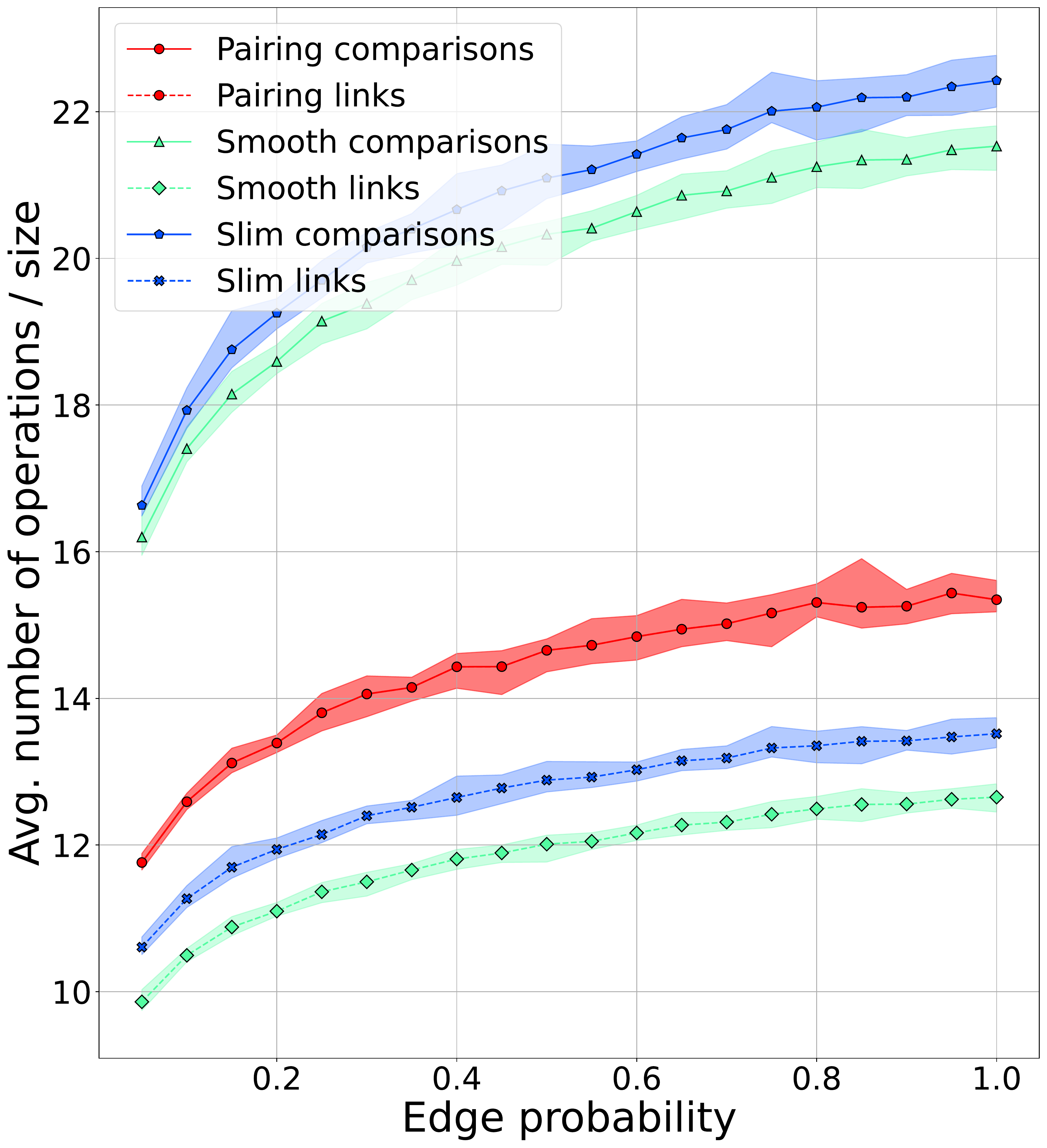}}
\caption{Random graphs with varying edge probability.}
\label{fig:exp-dijkstra1}
\end{subfigure}
\begin{subfigure}[b]{.49\textwidth}
  \centering{\includegraphics[height=2.3in]{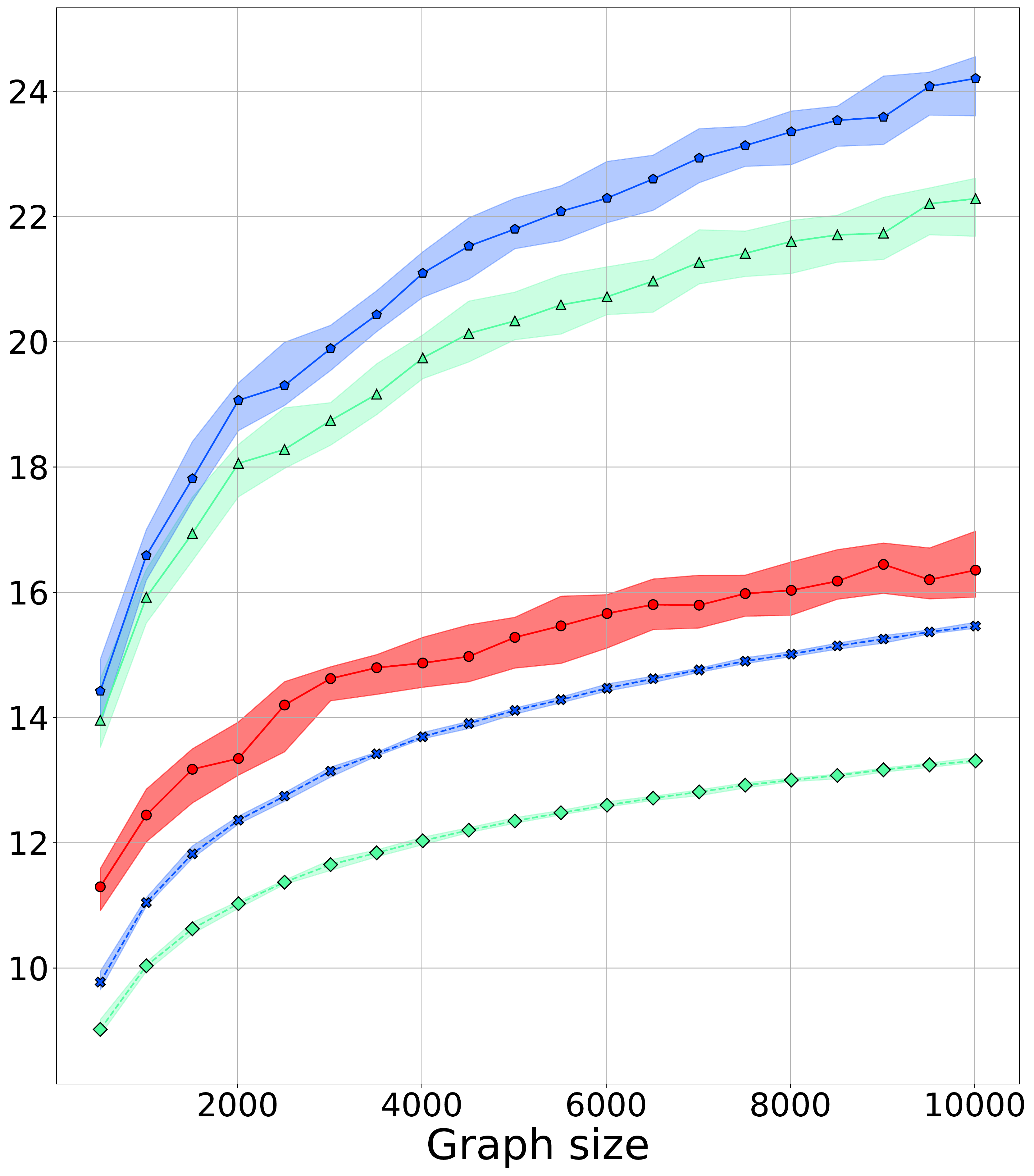}}
  \caption{Random regular graphs of varying size.}
\end{subfigure}
\caption{Average number of links (dashed lines) and average number of comparisons (continuous lines) for Dijkstra's algorithm using pairing heap (red), smooth heap (green), and slim heap (blue).\label{fig:exp3}}
\label{fig:exp-dijkstra2}
}
\end{figure}

\section{Remarks}

Self-adjusting data structures are simple to describe and implement but hard to analyze. Developing new approaches and tools for analyzing such data structures remains an exciting field, with many questions still open. Our design of slim heaps and our analysis of smooth and slim heaps (Theorems~\ref{thm:unweighted} and \ref{thm:remain}) add to what we know about such structures. 

One property that makes smooth heaps and slim heaps efficient is that each \emph{delete-min} links at most two new children to each node.  Neither the original version of pairing heaps nor any of the proposed variants has this property, but there is one variant that does: the \emph{pure pairing heap}, in which the heap is a forest instead of a single tree, \emph{insert} and \emph{meld} are done lazily by catenating the root lists, and \emph{delete-min} performs a single pairing pass: after the min-root is deleted, the remaining roots are linked in pairs, first and second, third and fourth, and so on.  During a \emph{delete-min}, each node acquires at most \emph{one} new child.  We offer the pure pairing heap as a data structure for further study: our analysis fails, because this structure does not have the second property that makes smooth and slim heaps efficient; a \emph{delete-min} can result in a list of roots rather than a single one.

We have proved that the simple method of doing \emph{decrease-key} in smooth and slim heaps, by merely cutting the node whose key decreases and adding it to the root list, takes $O(\lg n)$ amortized time.  For pairing heaps, Pettie has shown that the same method takes $O(4^{\sqrt{\lg\lg{n}}})$ amortized time.  Two authors of this paper~\cite{STdecrease} have recently adapted and extended Pettie's analysis to show that the simple decrease-key implementation in fact takes $\Omega(\lg\lg n)$ amortized time in smooth and slim heaps.

Similarly to other self-adjusting data structures, smooth heaps and slim heaps are expected to be \emph{adaptive}, i.e.\ to show greater efficiency on some specific inputs. Through the connection with Greedy BST~\cite{KS19}, smooth heaps are known to be highly adaptive in sorting mode, but these results do not seem easy to transfer to slim heaps. Adaptivity in general operation sequences is much less understood. Our Theorem~\ref{thm:remain} can be seen as a first result in this direction. We expect that other adaptive bounds, e.g.\ similar to those in~\cite{EFI12} can also be shown. A possible way for proving such bounds is to generalize the analysis in this paper to a weighted setting, much in the spirit of splay trees~\cite{ST85}. Such an extension of Theorem~\ref{thm:unweighted} is indeed possible, but with the current potential function it does not appear to yield nontrivial new bounds. A weighted analysis combined with a \emph{linear} potential function, i.e.\ $O(n)$ rather than $O(n\lg n)$ may lead to such results.

\clearpage
\bibliography{SH}

\providecommand{\noopsort}[1]{}
\begin{thebibliography}{10}

\bibitem{BBL}
Prosenjit Bose, Jonathan~F. Buss, and Anna Lubiw.
\newblock Pattern matching for permutations.
\newblock {\em Inf. Process. Lett.}, 65(5):277--283, 1998.
\newblock \href {https://doi.org/10.1016/S0020-0190(97)00209-3}
  {\path{doi:10.1016/S0020-0190(97)00209-3}}.

\bibitem{BrodalSurvey}
Gerth~St{\o}lting Brodal.
\newblock A survey on priority queues.
\newblock In Andrej Brodnik, Alejandro L{\'{o}}pez{-}Ortiz, Venkatesh Raman,
  and Alfredo Viola, editors, {\em Space-Efficient Data Structures, Streams,
  and Algorithms - Papers in Honor of J. Ian Munro on the Occasion of His 66th
  Birthday}, volume 8066 of {\em Lecture Notes in Computer Science}, pages
  150--163. Springer, 2013.
\newblock \href {https://doi.org/10.1007/978-3-642-40273-9\_11}
  {\path{doi:10.1007/978-3-642-40273-9\_11}}.

\bibitem{BLT12}
Gerth~St{\o}lting Brodal, George Lagogiannis, and Robert~E Tarjan.
\newblock Strict fibonacci heaps.
\newblock In {\em Proceedings of the forty-fourth annual ACM symposium on
  Theory of computing}, pages 1177--1184, 2012.

\bibitem{DHIKP09}
Erik~D. Demaine, Dion Harmon, John Iacono, Daniel~M. Kane, and Mihai
  {P\v{a}tra\c{s}cu}.
\newblock The geometry of binary search trees.
\newblock In {\em {SODA} 2009}, pages 496--505, 2009.
\newblock URL: \url{http://dl.acm.org/citation.cfm?id=1496770.1496825}.

\bibitem{DKKPZ}
Dani Dorfman, Haim Kaplan, L{\'{a}}szl{\'{o}} Kozma, Seth Pettie, and Uri
  Zwick.
\newblock Improved bounds for multipass pairing heaps and path-balanced binary
  search trees.
\newblock In Yossi Azar, Hannah Bast, and Grzegorz Herman, editors, {\em 26th
  Annual European Symposium on Algorithms, {ESA} 2018, August 20-22, 2018,
  Helsinki, Finland}, volume 112 of {\em LIPIcs}, pages 24:1--24:13. Schloss
  Dagstuhl - Leibniz-Zentrum f{\"{u}}r Informatik, 2018.
\newblock \href {https://doi.org/10.4230/LIPIcs.ESA.2018.24}
  {\path{doi:10.4230/LIPIcs.ESA.2018.24}}.

\bibitem{Elmasry17}
Amr Elmasry.
\newblock Toward optimal self-adjusting heaps.
\newblock {\em {ACM} Trans. Algorithms}, 13(4):55:1--55:14, 2017.
\newblock \href {https://doi.org/10.1145/3147138} {\path{doi:10.1145/3147138}}.

\bibitem{EFI12}
Amr Elmasry, Arash Farzan, and John Iacono.
\newblock A priority queue with the time-finger property.
\newblock {\em Journal of Discrete Algorithms}, 16:206--212, 2012.

\bibitem{FredmanLB}
Michael~L. Fredman.
\newblock On the efficiency of pairing heaps and related data structures.
\newblock {\em J. {ACM}}, 46(4):473--501, 1999.
\newblock URL: \url{http://doi.acm.org/10.1145/320211.320214}, \href
  {https://doi.org/10.1145/320211.320214} {\path{doi:10.1145/320211.320214}}.

\bibitem{FSST86}
Michael~L. Fredman, Robert Sedgewick, Daniel~Dominic Sleator, and Robert~Endre
  Tarjan.
\newblock The pairing heap: {A} new form of self-adjusting heap.
\newblock {\em Algorithmica}, 1(1):111--129, 1986.
\newblock \href {https://doi.org/10.1007/BF01840439}
  {\path{doi:10.1007/BF01840439}}.

\bibitem{Fibonacci}
Michael~L. Fredman and Robert~Endre Tarjan.
\newblock Fibonacci heaps and their uses in improved network optimization
  algorithms.
\newblock {\em J. {ACM}}, 34(3):596--615, 1987.
\newblock \href {https://doi.org/10.1145/28869.28874}
  {\path{doi:10.1145/28869.28874}}.

\bibitem{HollowHeaps}
Thomas~Dueholm Hansen, Haim Kaplan, Robert~E. Tarjan, and Uri Zwick.
\newblock Hollow heaps.
\newblock {\em {ACM} Trans. Algorithms}, 13(3):42:1--42:27, 2017.
\newblock \href {https://doi.org/10.1145/3093240} {\path{doi:10.1145/3093240}}.

\bibitem{Hartmann_thesis}
Maria Hartmann.
\newblock Efficiency of self-adjusting heap variants.
\newblock Master's thesis, Freie Universit\"at Berlin, 2020.
\newblock URL:
  \url{http://www.mi.fu-berlin.de/inf/groups/ag-ti/theses/master_finished/hartmann_maria/index.html}.

\bibitem{IaconoPairing}
John Iacono.
\newblock Improved upper bounds for pairing heaps.
\newblock In Magn{\'{u}}s~M. Halld{\'{o}}rsson, editor, {\em Algorithm Theory -
  {SWAT} 2000, 7th Scandinavian Workshop on Algorithm Theory, Bergen, Norway,
  July 5-7, 2000, Proceedings}, volume 1851 of {\em Lecture Notes in Computer
  Science}, pages 32--45. Springer, 2000.
\newblock \href {https://doi.org/10.1007/3-540-44985-X\_5}
  {\path{doi:10.1007/3-540-44985-X\_5}}.

\bibitem{IaconoOzkan}
John Iacono and {\"{O}}zg{\"{u}}r {\"{O}}zkan.
\newblock Why some heaps support constant-amortized-time decrease-key
  operations, and others do not.
\newblock In {\em Automata, Languages, and Programming - 41st International
  Colloquium, {ICALP} 2014, Copenhagen, Denmark, July 8-11, 2014, Proceedings,
  Part {I}}, pages 637--649, 2014.
\newblock \href {https://doi.org/10.1007/978-3-662-43948-7_53}
  {\path{doi:10.1007/978-3-662-43948-7_53}}.

\bibitem{IStalk}
John Iacono and Thatchaphol Saranurak.
\newblock personal communication.

\bibitem{KS19}
L{\'{a}}szl{\'{o}} Kozma and Thatchaphol Saranurak.
\newblock Smooth heaps and a dual view of self-adjusting data structures.
\newblock {\em {SIAM} J. Comput.}, 49(5), 2020.
\newblock \href {https://doi.org/10.1137/18M1195188}
  {\path{doi:10.1137/18M1195188}}.

\bibitem{LarkinSenTarjan}
Daniel~H. Larkin, Siddhartha Sen, and Robert~Endre Tarjan.
\newblock A back-to-basics empirical study of priority queues.
\newblock In Catherine~C. McGeoch and Ulrich Meyer, editors, {\em 2014
  Proceedings of the Sixteenth Workshop on Algorithm Engineering and
  Experiments, {ALENEX} 2014, Portland, Oregon, USA, January 5, 2014}, pages
  61--72. {SIAM}, 2014.
\newblock \href {https://doi.org/10.1137/1.9781611973198.7}
  {\path{doi:10.1137/1.9781611973198.7}}.

\bibitem{SplayNew}
Caleb~C. Levy and Robert~E. Tarjan.
\newblock A new path from splay to dynamic optimality.
\newblock In Timothy~M. Chan, editor, {\em Proceedings of the Thirtieth Annual
  {ACM-SIAM} Symposium on Discrete Algorithms, {SODA} 2019, San Diego,
  California, USA, January 6-9, 2019}, pages 1311--1330. {SIAM}, 2019.
\newblock \href {https://doi.org/10.1137/1.9781611975482.80}
  {\path{doi:10.1137/1.9781611975482.80}}.

\bibitem{Luc88}
Joan~M. Lucas.
\newblock Canonical forms for competitive binary search tree algorithms.
\newblock {\em Tech. Rep. DCS-TR-250, Rutgers University}, 1988.

\bibitem{Mun00}
J.Ian Munro.
\newblock On the competitiveness of linear search.
\newblock In Mike~S. Paterson, editor, {\em Algorithms - ESA 2000}, volume 1879
  of {\em Lecture Notes in Computer Science}, pages 338--345. Springer Berlin
  Heidelberg, 2000.
\newblock URL: \url{http://dx.doi.org/10.1007/3-540-45253-2_31}, \href
  {https://doi.org/10.1007/3-540-45253-2_31}
  {\path{doi:10.1007/3-540-45253-2_31}}.

\bibitem{PettiePairing}
Seth Pettie.
\newblock Towards a final analysis of pairing heaps.
\newblock In {\em 46th Annual {IEEE} Symposium on Foundations of Computer
  Science {(FOCS} 2005), 23-25 October 2005, Pittsburgh, PA, USA, Proceedings},
  pages 174--183. {IEEE} Computer Society, 2005.
\newblock \href {https://doi.org/10.1109/SFCS.2005.75}
  {\path{doi:10.1109/SFCS.2005.75}}.

\bibitem{Sanders00}
Peter Sanders.
\newblock Fast priority queues for cached memory.
\newblock {\em {ACM} J. Exp. Algorithmics}, 5:7, 2000.
\newblock \href {https://doi.org/10.1145/351827.384249}
  {\path{doi:10.1145/351827.384249}}.

\bibitem{SMDD19}
Peter Sanders, Kurt Mehlhorn, Martin Dietzfelbinger, and Roman Dementiev.
\newblock {\em Sequential and Parallel Algorithms and Data Structures - The
  Basic Toolbox}.
\newblock Springer, 2019.
\newblock \href {https://doi.org/10.1007/978-3-030-25209-0}
  {\path{doi:10.1007/978-3-030-25209-0}}.

\bibitem{STdecrease}
Corwin Sinnamon and Robert~E. Tarjan.
\newblock A tight analysis of slim heaps and smooth heaps.
\newblock In submission.

\bibitem{ST85}
Daniel~Dominic Sleator and Robert~Endre Tarjan.
\newblock Self-adjusting binary search trees.
\newblock {\em Journal of the ACM (JACM)}, 32(3):652--686, 1985.

\bibitem{StaskoVitter}
John~T. Stasko and Jeffrey~Scott Vitter.
\newblock Pairing heaps: Experiments and analysis.
\newblock {\em Commun. {ACM}}, 30(3):234--249, 1987.
\newblock \href {https://doi.org/10.1145/214748.214759}
  {\path{doi:10.1145/214748.214759}}.

\bibitem{tarjan1985amortized}
Robert~Endre Tarjan.
\newblock Amortized computational complexity.
\newblock {\em SIAM Journal on Algebraic Discrete Methods}, 6(2):306--318,
  1985.

\bibitem{Williams64}
John William~Joseph Williams.
\newblock Algorithm 232: heapsort.
\newblock {\em Commun. ACM}, 7:347--348, 1964.

\end{thebibliography}
\end{document}